\theoremstyle{plain}
\newtheorem{lem}{Lemma}[section]
\newtheorem{thm}[lem]{Theorem}
\newtheorem{cor}[lem]{Corollary}
\newtheorem{prop}[lem]{Proposition}
\theoremstyle{definition}
\newtheorem{rem}[lem]{Remark}
\newtheorem{ex}[lem]{Example}
\newtheorem{defn}[lem]{Definition}
\numberwithin{figure}{section}
\numberwithin{table}{section}
\let\ssection=\section
\renewcommand{\section}{\setcounter{equation}{0}\ssection}
\newcommand{\R}{\mathbb{R}}
\newcommand{\Z}{\mathbb{Z}}
\newcommand{\bP}{\mathbb{P}}
\newcommand{\Q}{\mathbb{Q}}
\newcommand{\E}{\mathcal{E}}
\newcommand{\F}{\mathcal{F}}
\newcommand{\cK}{\mathcal{K}}
\newcommand{\cL}{\mathcal{L}}
\newcommand{\cR}{\mathcal{R}}
\newcommand{\cU}{\mathcal{U}}
\newcommand{\cV}{\mathcal{V}}
\newcommand{\SL}{\mathrm{SL}}
\newcommand{\PSL}{\mathrm{PSL}}
\newcommand{\OSp}{\mathrm{OSp}}
\newcommand{\ES}{\mathrm{ES}}
\newcommand{\FS}{\mathrm{FS}}
\newcommand{\OS}{\mathrm{OS}}
\newcommand{\rmS}{\mathrm{S}}
\def\a{\alpha}
\def\b{\beta}
\def\d{\delta}
\def\g{\gamma}
\def\l{\lambda}
\def\thup{\mathop{\rm th}\nolimits}
\begin{document}

\title[Shadows of rationals and irrationals]
{Shadows of rationals and irrationals:\\
supersymmetric continued fractions\\
and the super modular group}

\author{Charles H.\ Conley}
\address{
Charles H.\ Conley,
Department of Mathematics 
\\University of North Texas 
\\Denton TX 76203, USA} 
\email{conley@unt.edu}

\author{Valentin Ovsienko}
\address{
Valentin Ovsienko,
Centre National de la Recherche Scientifique,
Laboratoire de Math\'ematiques de Reims, UMR9008 CNRS,
Universit\'e de Reims Champagne-Ardenne,
U.F.R. Sciences Exactes et Naturelles,
Moulin de la Housse - BP 1039,
51687 Reims cedex 2,
France}
\email{valentin.ovsienko@univ-reims.fr}

\thanks{\noindent
C.H.C.\ was partially supported by Simons Foundation Collaboration Grant~519533.\\
\indent V.O.\ was partially supported by the ANR project PhyMath, ANR-19-CE40-0021.
}



\begin{abstract}
This paper is an attempt to apply the tools of supergeometry to arithmetic.
Supergeometric objects are defined over
supercommutative rings of coefficients, and
we consider an integral ring with exactly two odd variables.
In this case the even quantities, such as numbers and continued fractions,
are ``doubled'', having both a classical and a nilpotent part.
We refer to the nilpotent part as the ``shadow''.
We investigate the notions of supersymmetric continued fractions
and the orthosymplectic modular group
and make some initial steps toward studying their properties.
\end{abstract}

\maketitle

\thispagestyle{empty}

\tableofcontents

\section{Introduction}\label{Intro}

In supergeometry, Lie supergroups and other geometric objects are considered over 
a $\Z_2$-graded ring $R=R_{\bar0}\oplus R_{\bar1}$ of coefficients.
This ring, however, usually remains unspecified.
This is perhaps the reason for which supergeometry
rarely produces concrete numeric sequences.
(There are exceptions.  For example, in \cite{Fre}, pp.~21--22, the ring
$\R[\eta_1,\ldots,\eta_L]$ with finitely many ``auxiliary odd parameters'' was considered.)

The idea of the ``shadow'' of a number \cite{Ovs1}
arose from supergeometry and cluster superalgebras
(cf.~\cite{Ovs, OS, OT}).
Suppose that the odd part of~$R$ contains
exactly two generators, say $\xi$ and~$\eta$.
One example of such a ring
is the following superextension of the integers,
the ring of coordinates on $\Z^{1|2}$:
\begin{equation}
\label{REq}
R=(\Z\oplus\Z\xi\eta)\oplus(\Z\xi\oplus\Z\eta).
\end{equation}
This is the ``minimal'' choice of the ring of coefficients in which
an even variable $A$ may have a non-trivial nilpotent part: $A=a+a'\xi\eta$.
In this situation, any natural procedure or algorithm acting on an integer $a$
will produce a sequence or quantity having not only a classical part,
but also an even nilpotent part, the shadow.
The notion of shadows was recently tested in~\cite{Hon,Ves},
where the idea was applied to sequences of integers
in the context of algebraic geometry and number theory.

The main goal of this paper is to introduce a notion of
supersymmetric continued fractions, together with the corresponding Farey tree.
This allows one to calculate the shadows of rational numbers.
The property of convergence of continued fractions
then extends the definition of shadows to irrationals.
It is amusing to note that a rational number may have multiple shadows,
while an irrational number seems to have only one.
It is difficult to say at this stage if these ``super'' continued fractions will have applications,
but their properties are quite nice and the explicit formulas are harmonious.

The second goal of this paper, closely related to the first,
is to study the notion of the ``super modular group'',
by which we mean the supergroup $\OSp(1|2)$
with coefficients in the ring~\eqref{REq}.
We denote this group by $\OSp(1|2, \Z)$.

Let us note that the recent work \cite{MOZ1} exploits related ideas
applied to hyperbolic geometry and combinatorics.
We were not aware of this reference while working on the present paper.

\subsection{Continued fractions}\label{IntroCF}
The regular finite continued fraction $[a_1,\ldots,a_n]$ is the expression
\begin{equation}
\label{CEq}
[a_1,\ldots,a_n]:=
a_1 + \cfrac{1}{a_2 + \cfrac{1}{\ddots +\cfrac{1}{a_{n}} } } \,,
\end{equation}
where $a_i \in \Z_{>0}$ for all $i > 1$.
Every rational number has exactly two finite continued fraction expansions.
This is due to the ambiguity $[a_1,\ldots,a_n]=[a_1,\ldots,a_n-1,\,1]$
for $a_n\geq2$: the \textit{length}~$n$ in~\eqref{CEq}
may be taken to be either even or odd.

Consider the well-known triangular generators of the modular group $\SL(2,\Z)$:
$$
R=\begin{pmatrix}
1&1\\[2pt]
0&1
\end{pmatrix},
\qquad\qquad
L=\begin{pmatrix}
1&0\\[2pt]
1&1
\end{pmatrix}.
$$
The continued fraction~\eqref{CEq} corresponds to the word
$R^{a_1} L^{a_2} R^{a_3} \cdots$ in $R$ and $L$ in the following sense.
Suppose that $\frac{p}{q}$ is a rational number in reduced form,
and $[a_1,\ldots,a_n]$ is one of its two continued fraction expansions.
Then if $n$ is even, say $n = 2m$, we have
\begin{equation}
\label{Prod1Eq}
\begin{pmatrix}
p\\[2pt]
q
\end{pmatrix}
=
R^{a_1}L^{a_2}\cdots{}R^{a_{2m-1}}L^{a_{2m}}
\begin{pmatrix}
1\\[2pt]
0
\end{pmatrix},
\end{equation}
while if $n$ is odd, say $n = 2m+1$, we have
\begin{equation}
\label{Prod2Eq}
\begin{pmatrix}
p\\[2pt]
q
\end{pmatrix}
=
R^{a_1}L^{a_2}\cdots{}L^{a_{2m}}R^{a_{2m+1}}
\begin{pmatrix}
0\\[2pt]
1
\end{pmatrix}.
\end{equation}

Now consider the $3\times3$ matrices
\begin{equation}
\label{RSEq}
\cR=\begin{pmatrix}
1&\phantom{-}1&\phantom{-}\xi\\[2pt]
0&\phantom{-}1&\phantom{-}0\\[2pt]
0&\phantom{-}\xi&\phantom{-}1
\end{pmatrix},
\qquad\qquad
\cL=\begin{pmatrix}
1&\phantom{-}0&\phantom{-}0\\[2pt]
1&\phantom{-}1&-\eta\\[2pt]
\eta&\phantom{-}0&\phantom{-}1
\end{pmatrix},
\end{equation}
where $\xi$ and $\eta$ are Grassmann variables, i.e.,
$$\xi^2=\xi\eta+\eta\xi=\eta^2=0.$$
These matrices belong to $\OSp(1|2, \Z)$, and
they seem to be the most natural superanalogues of $R$ and $L$.
They have frequently appeared in the literature; for instance,
$\cR$ was understood as the translation operator in \cite{Rab}.

In our definition of supersymmetric continued fractions,
we replace the matrices $R$ and $L$ in \eqref{Prod1Eq}
and \eqref{Prod2Eq} by $\cR$ and $\cL$,
apply the resulting words in $\cR$ and $\cL$
to the vectors $(1,0,0)^t$ and $(0,1,0)^t$, respectively,
and take the quotient of the two even coordinates.
This gives what we call the (finite) ``supersymmetric continued fraction'',
$$
\{a_1,\ldots,a_{n}\}=
[a_1,\ldots,a_n]+[a_1,\ldots,a_n]_\rmS\, \xi\eta.
$$
We refer to the coefficient in the nilpotent part,
$[a_1,\ldots,a_n]_\rmS$, as the shadow of $[a_1,\ldots,a_n]$.
Our main result is the following convergence property, which allows us to extend
the notion of shadows to irrationals.

\begin{thm}
\label{ConvThm}
For any integer sequence $a_1, a_2, a_3, \ldots$
such that $a_1 \geq 0$ and $a_i \geq 1$ for $i>1$,
the rational sequence $[a_1,a_2,a_3,\ldots,a_n]_\rmS$ converges.
Furthermore, the limit is positive if and only if $a_1 \geq 1$.
\end{thm}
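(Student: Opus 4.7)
The plan is to derive an explicit recursion for the shadow, bound its increments via classical convergent theory, and settle the positivity dichotomy using a supersymmetric reciprocal identity.

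First, I would decompose the iterated matrix-vector product $V_n := \cR^{a_1}\cL^{a_2}\cdots(*)$ into components according to the $\Z_2$-grading together with the further split into the $\xi\eta$-nilpotent subring. Writing the even coordinates of $V_n$ as $X_n = p_n + p_n'\xi\eta$ and $Y_n = q_n + q_n'\xi\eta$ and the odd coordinate as $u_n\xi + v_n\eta$, one sees that $p_n, q_n$ are the classical convergent numerator and denominator of $[a_1,\ldots,a_n]$, while the auxiliary quantities $p_n', q_n', u_n, v_n$ satisfy an affine system whose coefficients are built from $a_n$ together with the classical convergents. The shadow itself is then $[a_1,\ldots,a_n]_\rmS = (p_n' q_n - p_n q_n')/q_n^2$.

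Second, I would analyze the increment $\Delta_n := [a_1,\ldots,a_n]_\rmS - [a_1,\ldots,a_{n-1}]_\rmS$. Using the standard identity $p_n q_{n-1} - p_{n-1} q_n = \pm 1$ together with the affine recursion above, I expect a closed form for $\Delta_n$ whose absolute value is bounded by a summable sequence (plausibly $C/(q_{n-1}q_n)$ in the tail). Since $a_i \geq 1$ for $i > 1$ forces $q_n$ to grow at least Fibonacci-ly, this yields convergence of the partial sums.

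For positivity I distinguish two cases. When $a_1 \geq 1$, the direct calculation $[a_1,a_2]_\rmS = a_1 \geq 1$ supplies a strictly positive starting value, and I would identify a closed form for the limit in terms of the classical irrational $\alpha = [a_1,a_2,a_3,\ldots]$ and the shadows of the tails, from which the sign is transparent. When $a_1 = 0$, I would establish a supersymmetric reciprocal identity $\{0,a_2,a_3,\ldots\} = 1/\{a_2,a_3,\ldots\}$, extending the classical $[0,x]=1/x$; combined with the nilpotent inversion rule $(a+a'\xi\eta)^{-1} = a^{-1} - a^{-2}a'\xi\eta$, this yields $[0,a_2,a_3,\ldots]_\rmS = -[a_2,a_3,\ldots]_\rmS/[a_2,a_3,\ldots]^2$, which is strictly negative by the previous case applied to $(a_2,a_3,\ldots)$ and hence not positive.

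The main obstacle is the derivation of the increment formula with a summable bound: tracking how the odd third-coordinate contributions $u_n\xi+v_n\eta$ feed back into the $\xi\eta$-component of the first two coordinates at each matrix multiplication produces cross-terms whose cancellation must be verified to extract the desired decay. Obtaining the manifestly-positive closed form in the $a_1\geq 1$ case (rather than settling for an implicit drift-control argument from $[a_1,a_2]_\rmS = a_1$) adds another layer of bookkeeping. Once these are in place, both the Cauchy property and the sign dichotomy follow cleanly.
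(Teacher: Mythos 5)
Your overall architecture (explicit recursion for the shadow part, increment bounds from Fibonacci growth of the denominators, separate sign analysis) runs parallel to the paper's, which realizes the ``affine system'' you describe through shadowed continuants: one shows $[a_1,\ldots,a_n]_\rmS = D_n/K_{n-1}(a_2,\ldots,a_n)^2$ with $D_n = K'_nK_{n-1}-K_nK'_{n-1}$, and that $K'_n$ is essentially the Euler operator applied to $K_n$ plus a parity-dependent correction. Two remarks on the convergence half. First, your hoped-for rate $C/(q_{n-1}q_n)$ is too optimistic: in the identity $x'_n = \E(x_n) + (1-\overline n)/K_{n-1} - \overline n\,K_n/K_{n-1}^2$, only the $\E$-terms telescope via $K_nK_{n-2}-K_{n-1}^2=(-1)^n$; the parity-dependent corrections alternate in form and do not cancel against each other, so the increment is only $O\big((a_1+1)\varphi^{-n}\big)$. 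That is still summable, so your conclusion survives, but the ``cancellation to be verified'' that you flag is exactly where the content lies, and it does not come out as cleanly as in the classical case. Second, your reciprocal identity $\{0,a_2,\ldots\}=1/\{a_2,\ldots\}$ for the $a_1=0$ case appears to be true (it amounts to the first two coordinates of the two matrix-vector products being swapped), and it is a genuinely different and rather elegant route to negativity; but it is not as immediate as the classical $[0,x]=1/x$, since the length parity, the initial vector ($e_1$ versus $e_2$), and the roles of $\xi$ and $\eta$ all change simultaneously, so it needs its own inductive proof. The paper instead gets negativity for free from the recursion $D_n = a_1K_{n-1}(K_{n-1}-\overline n) - D_{n-1}$, which collapses to $D_n=-D_{n-1}$ when $a_1=0$.

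The genuine gap is in your positivity argument for $a_1\geq 1$. A strictly positive starting value $[a_1,a_2]_\rmS=a_1$ together with convergence does not bound the limit away from zero: the subsequent increments are only controlled by $C(a_1+1)\varphi^{-n}$ with an unspecified constant, so nothing in your plan prevents the tail from eating the initial value $a_1$. Your fallback, ``a closed form for the limit from which the sign is transparent,'' is not something the paper obtains and is not obviously available. What is actually needed is the term-by-term positivity of \emph{every} partial shadow, i.e.\ that the determinant $D_n(a_1,\ldots,a_n)$ is positive for all positive integer arguments; the paper proves this by expanding $D_n$ as an alternating sum via the recursion above and pairing consecutive terms, and then combines it with the monotonicity of the odd-indexed and even-indexed subsequences of $(x'_n)$ (which interlace around the limit, exactly as classical convergents do) to conclude that the limit is at least $x'_3>0$. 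Without some version of this uniform positivity plus interlacing, the ``only if'' direction of the sign dichotomy is unproven in your sketch.
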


Let $x$ be an irrational number with
continued fraction expansion $[a_1,a_2,a_3,\ldots]$.
We define the \textit{shadow} $(x)_\rmS$ of $x$ to be
$$
\left(x\right)_\rmS:=\lim_{n\to \infty} \left[a_1,a_2,a_3,\ldots,a_n\right]_\rmS.
$$
Theorem~\ref{ConvThm} is proven in Section~\ref{ConvSec}.
Its proof, like its statement, closely parallels the classical case.

\subsection{Two shadows of rationals}\label{IntroShR}
Consider a rational number with reduced expression $\frac{p}{q}$.
As noted earlier, it has two continued fraction expansions:
\begin{equation} \label{classical}
\frac{p}{q}=[a_1,\ldots,a_n]=[a_1, \ldots, a_{n-1}-1,\,1].
\end{equation}
The corresponding supersymmetric continued fractions give two different shadows,
$[a_1,\ldots,a_n]_{\rmS}$ and $[a_1, \ldots, a_{n-1}-1,\,1]_{\rmS}$,
which we call the \textit{even} and \textit{odd} shadows
of $\frac{p}{q}$, according to their length.
We denote them by $\big(\frac{p}{q}\big)_{\ES}$
and $\big(\frac{p}{q}\big)_{\OS}$.
Both are rational numbers, but they have quite different properties; 
see Sections~\ref{PropTransSec} and~\ref{LocSec}.

One heuristic property the even and odd shadows appear
to have in common is a certain ``fractal nature''
of the functions $\ES:\Q\to\Q$ and $\OS:\Q\to\Q$.
Computer experiments suggest that they are
neither continuous nor monotonic.
Some examples are given in Section~\ref{SmallExSec}.
In view of these observations, the convergence of
Theorem~\ref{ConvThm} seems particularly surprising.

It may appear strange that rational numbers have several different shadows,
while irrationals seem to have only one.  In fact, a
similar phenomenon occurs for $q$-deformations,
as noticed in~\cite{SVRe} and explained in~\cite{Lic}.

\subsection{Continuant polynomials}\label{IntroCP}

Classical continued fractions are related to certain remarkable polynomials
in several variables, $K_n(a_1,\ldots,a_n)$, known as \textit{continuants}.
Indeed, taking $a_1,\ldots,a_n$ in~\eqref{CEq} as variables, 
the continued fraction is given by the quotient of two continuants; 
see~\cite{Concr, BR1} and Section~\ref{EKSec}.

Supersymmetric continued fractions
may also be written as quotients of polynomials:
$$
\left\{a_1,\ldots,a_n\right\}=
\frac{\cK_n(a_1,\ldots,a_n)}{\cK_{n-1}(a_2,\ldots,a_n)},
$$
where $\cK_n(a_1,\ldots,a_n)=K_n(a_1,\ldots,a_n)+K'_n(a_1,\ldots,a_n)\xi\eta$.
The ``shadow'' part, $K'_n$, is a weighted version of $K_n$.
In Section~\ref{ExpCon} we derive its explicit formula.
This allows us to deduce several properties of 
supersymmetric continued fractions, including our convergence result.

The shadowed continuants connect the present paper to~\cite{MGOT, Ust}.
Indeed, the polynomials $\cK_n$ are special cases of the supercontinuants
arising from the notion of frieze patterns.

\subsection{The Farey tree}\label{FareyIntro}

The Farey (or Stern-Brocot) tree is a beautiful way to visualize the set of rational numbers,
completed with infinity, represented by $\frac{1}{0}$.
Each rational appears exactly once, labelling the connected components
of the planar complement of the tree.
In addition to continued fractions, the Farey tree is related to
hyperbolic geometry and many other subjects.

Our definition of the super (or ``shadowed'') Farey tree
begins with the following initial ``fishbone'' diagram:
\begin{equation}
\label{SuperFish}
\xymatrix @!0 @R=0.38cm @C=0.9cm
{
&\mkern-14mu -1&&\\
&\mkern-14mu \phantom{-}1&&\\
&\mkern-14mu \phantom{-}\eta&&\\
\\
1&\bullet\ar@{-}[ruu]\ar@{-}[luu]\ar@{-}[dd]&0\\
0&&1\\
0&\bullet\ar@{-}[ldd]\ar@{-}[rdd]&0\\
\\
&1&&\\
&1&&\\
&\xi&&
}
\end{equation}
This is justified by the fact that $\OSp(1|2)$ acts $3|2$-transitively
on the projective line $\bP^{1|1}$; see~\cite{DM}.
Therefore any three points of~$\bP^{1|1}$ may be sent to the set
$$
\big\{ (1:0:0),\ (0:1:0),\ (1:1:\xi) \big\},
$$
where $\xi$ an odd, or Grassmann, parameter.
We introduce the second parameter $\eta$ in a symmetric manner.

It turns out that the diagram~\eqref{SuperFish}
has a symmetry described by the matrices
\begin{equation}
\label{UVEq}
\cU=\begin{pmatrix}
\phantom{-}0&\phantom{-}1&\phantom{-}0\\[2pt]
-1&-1&\phantom{-}\eta\\[2pt]
\phantom{-}0&-\eta&\phantom{-}1
\end{pmatrix},
\qquad\qquad
\cV=\begin{pmatrix}
-1&1&\xi\\[2pt]
-1&0&0\\[2pt]
-\xi&0&1
\end{pmatrix}.
\end{equation}
They are related to $\cR$ and $\cL$ via
\begin{equation}
\label{RLUVEq}
\cV=\cR S,
\qquad
\cU=\cL^{-1}S,
\qquad
\hbox{where}
\qquad
S=\begin{pmatrix}
\phantom{-}0&1&0\\[2pt]
-1&0&0\\[2pt]
\phantom{-}0&0&1
\end{pmatrix}.
\end{equation}
Let us mention that the matrices of this form were understood in~\cite{MGOT} 
as superanalogues of the discrete Sturm-Liouville operator.

\section{Supersymmetric continued fractions and their simplest properties}\label{SSContF}

In this section we give a more detailed definition of
the supersymmetric, or ``shadowed'', continued fraction.
We then define the even and odd shadows of a rational number
and describe some of their general properties.

\subsection{Definitions and notations}\label{NotSec}

Consider a classical continued fraction, $\left[a_1,\ldots,a_{n}\right]=\frac{p}{q}$,
where as usual the $a_i$ are positive integers for $i > 1$
and $\frac{p}{q}$ is given in reduced terms.
We define the supersymmetric continued fraction 
$\left\{a_1,\ldots,a_{n}\right\}$ in terms of the matrices
$\cR$ and $\cL$ given in \eqref{RSEq}:

\begin{defn}
\label{TheMainDefn}
Define integers $p'$, $q'$, $\lambda$, and $\mu$ via the following equations.
For $n=2m$ even, set
\begin{equation}
\label{EvenProd}
\cR^{a_1}\cL^{a_2}\cdots\cR^{a_{2m-1}}\cL^{a_{2m}}
\begin{pmatrix}
1\\
0\\
0
\end{pmatrix}=
\begin{pmatrix}
p+p'\xi\eta\\
q+q'\xi\eta\\
\l\xi+\mu\eta
\end{pmatrix}.
\end{equation}
For $n=2m+1$ odd, set
\begin{equation}
\label{OddProd}
\cR^{a_1}\cL^{a_2}\cdots\cL^{a_{2m}}\cR^{a_{2m+1}}
\begin{pmatrix}
0\\
1\\
0
\end{pmatrix}=
\begin{pmatrix}
p+p'\xi\eta\\
q+q'\xi\eta\\
\l\xi+\mu\eta
\end{pmatrix}.
\end{equation}

For all $n$, set
\begin{equation}
\label{SCFEq}
\left\{a_1,\ldots,a_{n}\right\}:=
\frac{p+p'\xi\eta}{q+p'\xi\eta}=
\frac{p}{q}+\frac{p'q-pq'}{q^2}\xi\eta.
\end{equation}
The coefficient of $\xi\eta$ is then the shadow
of the supersymmetric continued fraction:
\begin{equation}
\label{QuotShad}
\left[a_1,\ldots,a_{n}\right]_{S}:=\frac{p'q-pq'}{q^2}.
\end{equation}
\end{defn}

We also define the even and odd shadows of the rational $\frac{p}{q}$, written as
$$
\Big(\frac{p}{q}\Big)_{\ES},
\qquad\qquad
\Big(\frac{p}{q}\Big)_{\OS}.
$$
They are the shadows of the even and odd supersymmetric continued fractions
coming from \eqref{classical}: one of them is $[a_1, \ldots, a_n]_\rmS$,
and the other is $[a_1, \ldots, a_{n-1}-1,\, 1]_\rmS$,
which is which being determined by the parity of $n$.
In general, they are different.

\begin{ex}
\label{SimpEx}
To give a simple example, consider
$\frac{5}{2}=[2,2]=[2,1,1]$.
One finds that
$$
\cR^2\cL^2
\begin{pmatrix}
1\\
0\\
0
\end{pmatrix}=
\begin{pmatrix}
5+4\xi\eta\\
2\\
4\xi+2\eta
\end{pmatrix},
\qquad\qquad
\cR^2\cL\,\cR
\begin{pmatrix}
0\\
1\\
0
\end{pmatrix}=
\begin{pmatrix}
5+4\xi\eta\\
2+\xi\eta\\
5\xi+\eta
\end{pmatrix}.
$$
The two quotients
$$
\{2,2\}=\frac{5+4\xi\eta}{2}=\frac{5}{2}+2\xi\eta,
\qquad\qquad
\{2,1,1\}=\frac{5+4\xi\eta}{2+\xi\eta}=\frac{5}{2}+\frac{3}{4}\xi\eta
$$
give the shadows $\big(\frac{5}{2}\big)_{\ES}=2$
and $\big(\frac{5}{2}\big)_{\OS}=\frac{3}{4}$.
\end{ex}

\subsection{Small examples}\label{SmallExSec}
The even and odd shadows of all rationals $1\leq\frac{p}{q}\leq2$ with 
denominators $q\leq12$ are depicted in Figs.~\ref{Sha12Fig} and~\ref{OddSha12}, respectively.
Note that the values seem to be increasingly ``fractal'' as $q$ grows.
\begin{figure}[htpb]
    \centering
    \includegraphics[width=0.7\textwidth]{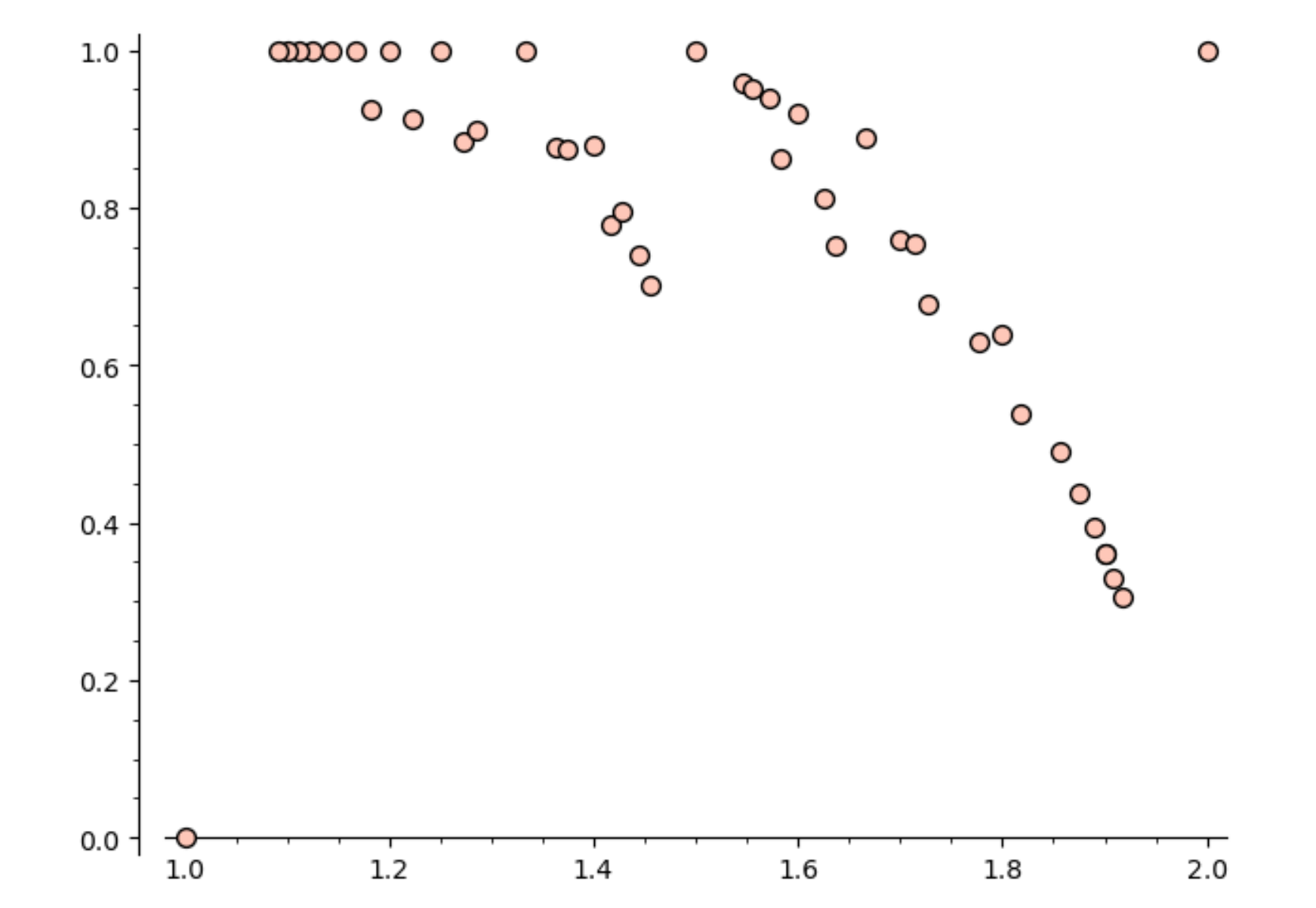}
    \caption{Even shadows of small rationals.}
    \label{Sha12Fig}
\end{figure}
\begin{figure}[htpb]
    \centering
    \includegraphics[width=0.7\textwidth]{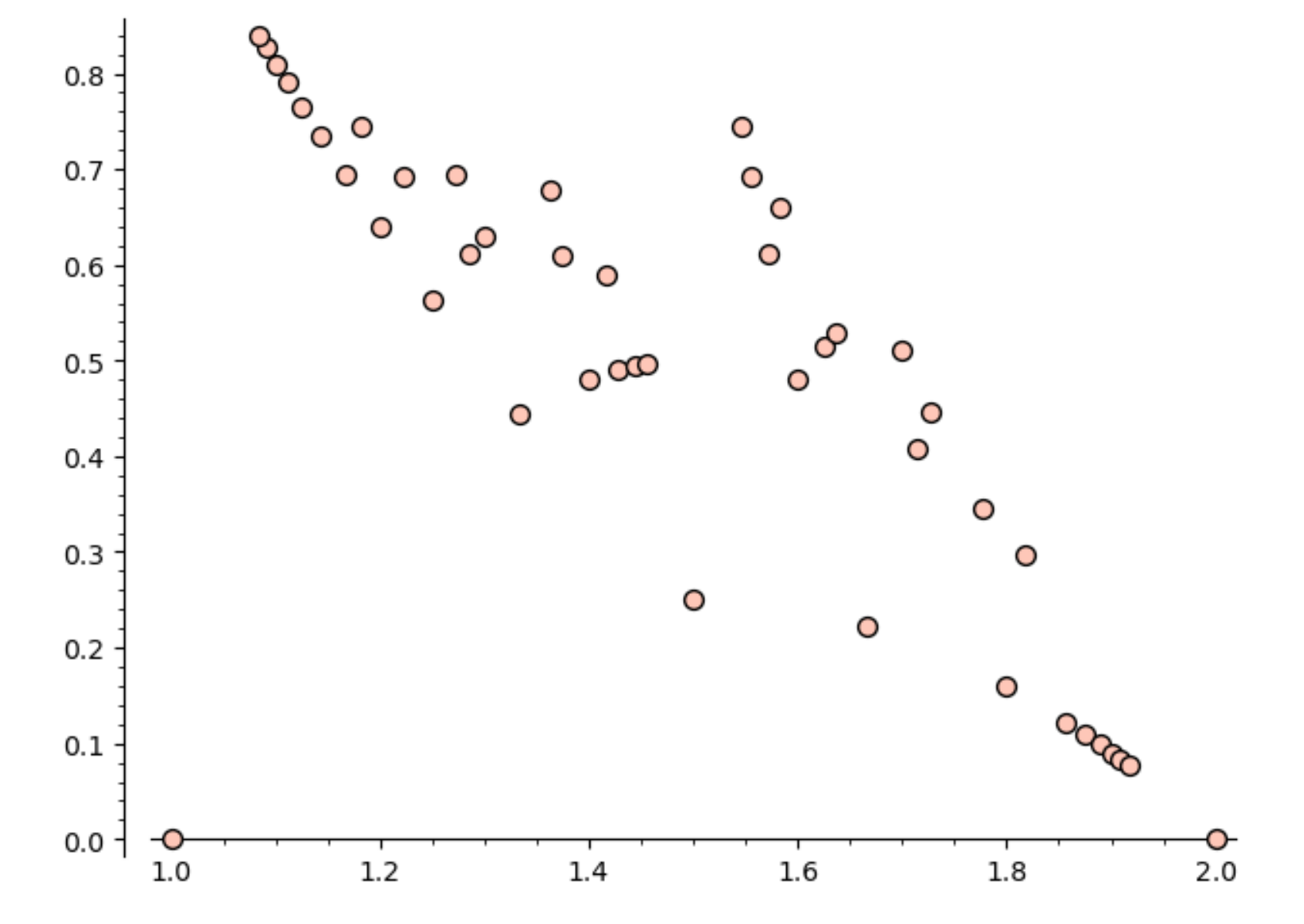}
    \caption{Odd shadows of small rationals.}
    \label{OddSha12}
\end{figure}

\subsection{Translation invariance and shadows of integers}\label{PropTransSec}
The even shadow is translation-invariant, but the odd shadow is not.

\begin{prop}
\label{TransProp}
One has 
$$
\Big(\frac{p}{q}+1\Big)_{\ES}=
\Big(\frac{p}{q}\Big)_{\ES}+1.
$$
\end{prop}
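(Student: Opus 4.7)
The plan is to reduce the proposition to a statement about a single left-multiplication by $\cR$. Since $\frac{p}{q}+1 = [a_1+1, a_2, \ldots, a_n]$ has a continued fraction expansion of the same length as $[a_1, \ldots, a_n]$, one may represent both $\frac{p}{q}$ and $\frac{p}{q}+1$ by matrix products as in Definition~\ref{TheMainDefn} of the same (even) length; the product for the latter is simply $\cR$ times that for the former. The proof therefore reduces to understanding how prepending $\cR$ shifts the parameters $(p, p', q, q', \lambda, \mu)$ of the output column vector.

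First, I would compute by hand the action of each of $\cR$ and $\cL$ on a generic vector $(u + u'\xi\eta,\, v + v'\xi\eta,\, \lambda\xi + \mu\eta)^t$, using only $\xi^2 = \eta^2 = 0$ and $\xi\eta = -\eta\xi$. The key consequence is that the quantity $\mu - v$ is preserved by both generators: under $\cR$, both $v$ and $\mu$ are unchanged; under $\cL$, both are incremented by $u$.

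Second, since the starting vector $(1,0,0)^t$ satisfies $\mu - v = 0$, the output of any word in $\cR$ and $\cL$ applied to $(1,0,0)^t$ satisfies $\mu = q$. Combined with the explicit action of $\cR$ from the first step, this gives the shifted parameters $P = p+q$, $P' = p' + q' + q$, $Q = q$, $Q' = q'$, and formula~\eqref{QuotShad} yields
\begin{equation*}
\left(\tfrac{p}{q} + 1\right)_{\ES} = \frac{P'Q - PQ'}{Q^2} = \frac{(p'+q'+q)q - (p+q)q'}{q^2} = \frac{p'q - pq'}{q^2} + 1 = \left(\tfrac{p}{q}\right)_{\ES} + 1.
\end{equation*}

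The conceptual heart of the argument is the conserved quantity $\mu - v$; without it, the shift would depend on the auxiliary parameter $\mu$, and translation invariance would fail (as it indeed does for the odd shadow, whose starting vector $(0,1,0)^t$ has $\mu - v = -1$). Once this invariant is in hand, all remaining computations are routine, so I anticipate no serious obstacle.
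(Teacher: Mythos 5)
Your proposal is correct and follows essentially the same route as the paper: reduce to prepending a single $\cR$, observe that $\cR$ shifts the shadow by $\mu/q$, and establish by induction on the word in $\cR$ and $\cL$ that $\mu=q$ throughout, since the starting vector $(1,0,0)^t$ satisfies $\mu-q=0$ and both generators preserve this quantity. Your formulation of the invariant as the conserved difference $\mu-v$ (which also explains the failure for odd shadows) is a slightly cleaner packaging of the paper's inductive step, but the argument is the same.
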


\begin{proof}
Let us calculate how the operator $\cR$ changes the shadow of a rational.
Apply it to an arbitrary vector:
\begin{equation} \label{temp}
\begin{pmatrix}
1&1&\xi\\[2pt]
0&1&0\\[2pt]
0&\xi&1
\end{pmatrix}
\begin{pmatrix}
p+p'\xi\eta\\[2pt]
q+q'\xi\eta\\[2pt]
\l\xi+\mu\eta
\end{pmatrix}=
\begin{pmatrix}
p+q+(p'+q'+\mu)\xi\eta\\[2pt]
q+q'\xi\eta\\[2pt]
(\l+q)\xi+\mu\eta
\end{pmatrix}.
\end{equation}
Recall the quotient \eqref{SCFEq} defining the supersymmetric continued fraction.
The quotient corresponding to the right hand side of \eqref{temp} is
$$
\frac{p+q}{q}+\frac{(p+q)q'-(p'+q'+\mu)q}{q^2}\xi\eta=
\frac{p+q}{q}+\frac{p'q-pq'}{q^2}\xi\eta+\frac{\mu}{q}\xi\eta.
$$
Thus $\cR$ changes the shadow $\big(\frac{p}{q}\big)_{\ES}$ by adding $\frac{\mu}{q}$.
The equation \eqref{temp} also shows that $\cR$ does not change $q$ or $\mu$.

Now consider the action of $\cL$.  Check that if $q$ and $\mu$ are equal
before applying $\cL$, they remain equal afterward:
$$
\begin{pmatrix}
1&0&\!0\\[2pt]
1&1&\!-\eta\\[2pt]
\eta&0&\!1
\end{pmatrix}
\begin{pmatrix}
p+p'\xi\eta\\[2pt]
q+q'\xi\eta\\[2pt]
\l\xi+\mu\eta
\end{pmatrix}=
\begin{pmatrix}
p+p'\xi\eta\\[2pt]
p+q+(p'+q'+\l)\xi\eta\\[2pt]
\l\xi+(p+\mu)\eta
\end{pmatrix}.
$$
It follows by induction that in all the partial products of~\eqref{EvenProd},
the coefficients $\mu$ and $q$ are equal.
In particular, the even supersymmetric continued fractions
$\{a_1,a_2,\ldots,a_{2m}\}$ and $\{a_1+1,\,a_2,\ldots,a_{2m}\}$
indeed differ by~$1$.
\end{proof}

Coupling Proposition~\ref{TransProp} with
Theorem~\ref{ConvThm} gives the following corollary.

\begin{cor} \label{irrat trans}
If $x$ is any positive irrational, then $(x+1)_\rmS=(x)_\rmS+1$.
\end{cor}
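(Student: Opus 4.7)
The plan is to reduce the corollary to Proposition~\ref{TransProp} by approximating $x$ and $x+1$ by their even-length convergents and passing to the limit.

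First, write the infinite continued fraction expansion of the positive irrational $x$ as $[a_1, a_2, a_3, \ldots]$, with $a_1 \geq 0$ and $a_i \geq 1$ for $i \geq 2$. Since adding an integer to $x$ only affects the first partial quotient, the expansion of $x+1$ is $[a_1+1, a_2, a_3, \ldots]$, and this again satisfies the hypotheses of Theorem~\ref{ConvThm}. In particular, both defining limits $(x)_\rmS = \lim_n [a_1,\ldots,a_n]_\rmS$ and $(x+1)_\rmS = \lim_n [a_1+1, a_2,\ldots, a_n]_\rmS$ exist.

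Next, I would restrict the limit to the subsequence of even indices $n = 2m$, which still converges to the full limit. For each $m$, write $\frac{p_{2m}}{q_{2m}} := [a_1, \ldots, a_{2m}]$; by Definition~\ref{TheMainDefn}, the finite shadow $[a_1,\ldots, a_{2m}]_\rmS$ is precisely the even shadow $\bigl(\tfrac{p_{2m}}{q_{2m}}\bigr)_{\ES}$, since it arises from the even-length matrix product in~\eqref{EvenProd}. Similarly, $[a_1+1, a_2, \ldots, a_{2m}]$ is a rational with reduced form $\frac{p_{2m}+q_{2m}}{q_{2m}} = \frac{p_{2m}}{q_{2m}} + 1$, and the corresponding finite shadow coincides with its even shadow.

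Now apply Proposition~\ref{TransProp} to each $\frac{p_{2m}}{q_{2m}}$ to obtain
$$
[a_1+1, a_2, \ldots, a_{2m}]_\rmS = [a_1, a_2, \ldots, a_{2m}]_\rmS + 1,
$$
and let $m \to \infty$. The left-hand side tends to $(x+1)_\rmS$, the right-hand side tends to $(x)_\rmS + 1$, and the desired identity follows.

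The only mild point to check is that Proposition~\ref{TransProp} applies uniformly even when $a_1 = 0$ (the case $x \in (0,1)$); this is immediate from its proof, which only uses that one multiplies the even product on the left by one additional factor of $\cR$ and that $\mu = q$ holds throughout all partial products of~\eqref{EvenProd}. No genuine obstacle arises, since the heavy lifting is already done by Theorem~\ref{ConvThm} and Proposition~\ref{TransProp}.
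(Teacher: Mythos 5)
Your proof is correct and follows essentially the same route as the paper, which gives no separate argument but simply states that the corollary is obtained by ``coupling Proposition~\ref{TransProp} with Theorem~\ref{ConvThm}''; your write-up just makes the limit-passage along even-length convergents explicit. The identification of $[a_1,\ldots,a_{2m}]_\rmS$ with the even shadow of $p_{2m}/q_{2m}$ and the check that the case $a_1=0$ causes no trouble are both sound.
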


Translation-invariance does not hold for odd shadows.
For instance, $\big(\frac{5}{2}\big)_{\OS}=\frac{3}{4}$,
but $\big(\frac{7}{2}\big)_{\OS}=\frac{5}{4}$.
As regards integers, a short computation gives
\begin{equation} \label{int case}
(n)_{\ES}=n-1,
\qquad\qquad
(n)_{\OS}=0.
\end{equation}

\subsection{Accordance between even and odd continued fractions}
\label{EvenOddSec}

Suppose that $[a_1,a_2,\ldots,a_{2m}]$ and $[a'_1,a'_2,\ldots,a'_{2m\pm1}]$
are even and odd continued fractions representing the same rational.
Under our choice of the initial vectors in Definition~\ref{TheMainDefn},
the corresponding even and odd supersymmetric continued fractions may differ.
However, as demonstrated by the following proposition,
there is a different choice of the initial vectors under which
the even and odd supersymmetric continued fractions are the same.

\begin{prop}
\label{AccordProp}
If\/ $[a_1,a_2,\ldots,a_{2m}]=[a'_1,a'_2,\ldots,a'_{2m\pm1}]$, then
$$
\cR^{a_1}\cL^{a_2}\cdots\cR^{a_{2m-1}}\cL^{a_{2m}}
\begin{pmatrix}
\phantom{-}1\\
\phantom{-}0\\
-\eta
\end{pmatrix}=
\cR^{a'_1}\cL^{a'_2}\cdots\cL^{a'_{2m}}\cR^{a'_{2m\pm1}}
\begin{pmatrix}
\phantom{-}0\\
\phantom{-}1\\
-\xi
\end{pmatrix}.
$$
\end{prop}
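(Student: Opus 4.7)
The plan is to reduce the identity to a single three-dimensional check by cancelling a common prefix of $\cR$'s and $\cL$'s.

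First I would recall that two continued fractions of different parity represent the same rational precisely when they are related by the classical move $[b_1,\ldots,b_{k-1},b_k,1]=[b_1,\ldots,b_{k-1},b_k+1]$ (equivalently $[b_1,\ldots,b_k]=[b_1,\ldots,b_k-1,1]$ for $b_k\geq 2$). Thus the hypothesis $[a_1,\ldots,a_{2m}]=[a'_1,\ldots,a'_{2m\pm1}]$ splits into two cases. In the $+$ case we have $a_{2m}\geq 2$ with $a'_i=a_i$ for $i\leq 2m-1$, $a'_{2m}=a_{2m}-1$, $a'_{2m+1}=1$. In the $-$ case we have $a_{2m}=1$ with $a'_i=a_i$ for $i\leq 2m-2$ and $a'_{2m-1}=a_{2m-1}+1$.

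In each case the words $\cR^{a_1}\cL^{a_2}\cdots$ on the two sides of the desired equality share a long common left factor. After cancelling this prefix, both cases reduce to proving the single identity
\begin{equation*}
\cL\begin{pmatrix}\phantom{-}1\\\phantom{-}0\\-\eta\end{pmatrix}
=\cR\begin{pmatrix}\phantom{-}0\\\phantom{-}1\\-\xi\end{pmatrix}.
\end{equation*}
Indeed, in the $+$ case one cancels $\cR^{a_1}\cdots\cR^{a_{2m-1}}\cL^{a_{2m}-1}$ from the left; in the $-$ case one cancels $\cR^{a_1}\cdots\cR^{a_{2m-1}}$ and then recognises $\cR^{a_{2m-1}+1}=\cR^{a_{2m-1}}\cR$.

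The remaining step is a direct computation using only $\xi^2=\eta^2=0$. One checks that both vectors equal $(1,1,0)^t$: for the left-hand side the second entry gives $1\cdot1+1\cdot0+(-\eta)(-\eta)=1$ and the third gives $\eta-\eta=0$; for the right-hand side the first entry gives $0+1+\xi(-\xi)=1$ and the third gives $\xi-\xi=0$. There is no real obstacle; the only care needed is the bookkeeping of the two parity-change cases, but both collapse to the same elementary identity, so the proposition follows.
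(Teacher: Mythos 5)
Your proof is correct and is essentially the paper's own argument: the paper likewise reduces everything to the single identity $\cL(e_1-\eta e_3)=\cR(e_2-\xi e_3)=e_1+e_2$, which is exactly your cancelled-prefix computation. You simply make the two parity-change cases and the prefix cancellation explicit, which the paper leaves to the reader.
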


The proof reduces to the observation that
$\cL(e_1 - \eta e_3)$ and $\cR(e_2 - \xi e_3)$ are equal:
both are $e_1 + e_2$.
(Here we write $e_i$ for the standard basis vectors.)
But although the argument is simple,
we do not yet have a conceptual understanding of the result.
It may be that these new initial vectors and the associated
accordance of the even and odd supersymmetric continued fractions
have a deeper significance.

\section{Analytic formulas for continued fractions}\label{ExpSec}

Our next goal is to give an explicit formula for supersymmetric continued fractions.
As in the classical case, the main ingredient is the continuant polynomial.
We calculate the ``shadowed continuant'' arising from supersymmetric continued fractions.

\subsection{Euler's continuants}\label{EKSec}

Recall that classical continued fractions \eqref{CEq} may be expressed
in terms of the \textit{continuant} polynomial $K_n(a_1,\ldots,a_n)$,
defined as the $n\times{}n$ determinant
\begin{equation} 
\label{ContEq}
K_n(a_1,\ldots,a_n):=
\det
\begin{pmatrix}
a_1&1&&&\\[4pt]
\!-1&a_{2}&1&&\\[4pt]
&\ddots&\ddots&\!\!\ddots&\\[4pt]
&&-1&a_{n-1}&\!\!\!\!\!1\\[4pt]
&&&\!\!\!\!-1&\!\!\!\!a_{n}
\end{pmatrix}.
\end{equation}
The first few continuants are
\begin{eqnarray*}
K_0&=&1,
\\
K_1(a_1)&=&a_1,
\\
K_2(a_1,a_2)&=&1+a_1a_2,
\\
K_3(a_1,a_2,a_3)&=&a_1+a_3+a_1a_2a_3,
\\
K_4(a_1,a_2,a_3,a_4)&=&1+a_1a_2+a_1a_4+a_3a_4+a_1a_2a_3a_4.
\end{eqnarray*}

They satisfy the recurrence formula
\begin{equation}
\label{ClassRec}
K_n(a_1,\ldots,a_n)=a_1K_{n-1}(a_2,\ldots,a_n)+K_{n-2}(a_3,\ldots,a_n),
\end{equation}
which is equivalent to the fact that their quotients
are the classical continued fractions:
\begin{equation} 
\label{ContCFEq}
\left[a_1,\ldots,a_n\right]=
\frac{K_n(a_1,\ldots,a_n)}{K_{n-1}(a_2,\ldots,a_n)}.
\end{equation}
Continuants have been studied since the time of Euler~\cite{Eul}
and have many beautiful properties \cite{Concr, BR1}.

\subsection{The shadowed continuant}\label{SKSec}

\begin{defn}
\label{ShContDef}
The \textit{shadowed continuant} $\cK_n$ is the polynomial in~$n$ commuting variables
$a_1,\ldots,a_n$ and two Grassmann variables $\xi, \eta$
given by
\begin{equation}
\label{ShContEq}
\cK_n(a_1,\ldots,a_n)=
\left\{
\begin{array}{lr}
\big(1+\xi\eta(\E-1)\big) K_n(a_1,\ldots,a_n) + \xi\eta,
& n\;\hbox{even},\\[6pt]
\big(1+\xi\eta(\E-1)\big) K_n(a_1,\ldots,a_n),
& n\;\hbox{odd}.
\end{array}
\right.
\end{equation}
Here $\E$ is the Euler operator:
\begin{equation} 
\label{EulerEq}
\sum_i\E=a_i\frac{\partial}{\partial a_i}.
\end{equation}
\end{defn}

Equivalently, 
\begin{equation} \label{ShContDefBis}
\cK_n(a_1,\ldots,a_n)=K_n(a_1,\ldots,a_n)+\xi\eta K'_n(a_1,\ldots,a_n),
\end{equation}
where the ``shadow part'' is
\begin{equation} 
\label{ShContEqBis}
K'_n(a_1,\ldots,a_n)=
\left\{
\begin{array}{lr}
(\E-1)K_n(a_1,\ldots,a_n) + 1,
& n\;\hbox{even},\\[6pt]
(\E-1)K_n(a_1,\ldots,a_n),
& n\;\hbox{odd}.
\end{array}
\right.
\end{equation}

In degree $k \geq 2$, the shadow part contains the same monomials
$a_{i_1}\cdots{}a_{i_k}$ as the classical continuant, but with coefficient $(k-1)$.
For example, 
\begin{eqnarray*}
K'_0&=&0,
\\
K'_1(a_1)&=&0,
\\
K'_2(a_1,a_2)&=&a_1a_2,
\\
K'_3(a_1,a_2,a_3)&=&2a_1a_2a_3,
\\
K'_4(a_1,\ldots,a_4)&=&a_1a_2+a_1a_4+a_3a_4+3a_1a_2a_3a_4,
\\
K'_5(a_1,\ldots,a_5)&=&2a_1a_2a_3+2a_1a_2a_5+2a_1a_4a_5+2a_3a_4a_5+
4a_1a_2a_3a_4a_5.
\end{eqnarray*}

\begin{rem}
The shadowed continuant $\cK_n(a_1,\ldots,a_n)$
is a special instance of the supercontinuant
$K\bigl(\begin{smallmatrix} a_1\\
\begin{smallmatrix} \beta_1 & \beta_1
\end{smallmatrix}
\end{smallmatrix}|\ldots |\begin{smallmatrix}
a_n\\\begin{smallmatrix} \beta_n & \beta_n
\end{smallmatrix}
\end{smallmatrix}\bigr)$
of Ustinov \cite{MGOT, Ust}, obtained by substituting
$$
\beta_{2i+1}=(-1)^ia_{2i+1}\xi,
\qquad\qquad
\beta_{2i}=(-1)^{i+1}a_{2i}\eta.
$$
Therefore, Ustinov's determinant formulas and combinatorial algorithm 
can be applied to $\cK_n(a_1,\ldots,a_n)$.
\end{rem}

\subsection{Continued fractions, explicitly}\label{ExpCon}
An analogue of the classical formula~\eqref{ContCFEq}
holds for both even and odd shadowed continued fractions:
\begin{thm}
\label{CFProp}
The supersymmetric continued fraction is
a quotient of shadowed continuants:
\begin{equation}
\label{ShadContCFEq}
\left\{a_1,\ldots,a_n\right\}=
\frac{\cK_n(a_1,\ldots,a_n)}{\cK_{n-1}(a_2,\ldots,a_n)}.
\end{equation}
\end{thm}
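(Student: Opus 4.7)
The plan is to prove the theorem by induction on $n$, using the strengthened claim that the upper-left $2\times 2$ block of the matrix product $P_n := \cR^{a_1}\cL^{a_2}\cdots$ has its even entries given by shadowed continuants:
$$
P_n \;=\; \begin{pmatrix}
\cK_n(a_1,\ldots,a_n) & \cK_{n-1}(a_1,\ldots,a_{n-1}) & \ast \\
\cK_{n-1}(a_2,\ldots,a_n) & \cK_{n-2}(a_2,\ldots,a_{n-1}) & \ast \\
\ast & \ast & \ast
\end{pmatrix}
$$
for $n$ even, with the first and second columns swapped when $n$ is odd. Granted this, the theorem is immediate from Definition~\ref{TheMainDefn} and formula~\eqref{SCFEq}, since the column vector appearing on the left of~\eqref{EvenProd} and~\eqref{OddProd} is exactly the relevant column of $P_n$.

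The delicate point is that the update $P_n = P_{n-1}\cdot X^{a_n}$ (with $X = \cR$ or $\cL$ depending on parity) mixes the upper rows of $P_{n-1}$ with its third column: the third row of $\cR^{a_n}$ contains $a_n\xi$ and that of $\cL^{a_n}$ contains $a_n\eta$, so the new $\xi\eta$ contributions in the upper even block arise from cross products with $(P_{n-1})_{13}$ and $(P_{n-1})_{23}$. I would therefore carry along an auxiliary induction on these two odd entries; a straightforward recursion shows they have the explicit form $\alpha\xi + \beta\eta$ with $\alpha, \beta$ simple polynomial combinations of classical continuants, for instance $(P_n)_{13} = K_{n-1}(a_1,\ldots,a_{n-1})\,\xi + \bigl(1-K_n(a_1,\ldots,a_n)\bigr)\,\eta$ when $n$ is even, and $(P_n)_{23}$ is analogous with the arguments shifted to begin at $a_2$.

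The algebraic core of the inductive step is the identity obtained by applying the Euler operator $\E$ to the classical recursion $K_n = a_n K_{n-1} + K_{n-2}$, namely
$$
(\E-1)K_n \;=\; a_n(\E-1)K_{n-1} \,+\, (\E-1)K_{n-2} \,+\, a_n K_{n-1}.
$$
The extra $a_n K_{n-1}$ term here is precisely what the anticommuting cross products between the third row of $X^{a_n}$ and the third column of $P_{n-1}$ contribute to the new even block, and the parity-dependent $+1$ in~\eqref{ShContEqBis} is calibrated so that it cancels correctly between consecutive steps. Base cases $n=0,1,2$ are routine direct computations. The principal obstacle is bookkeeping: the signs from $\eta\xi = -\xi\eta$ and the parity alternation of the $+1$ correction in $K'_n$ must align exactly with the Euler identity, but once the auxiliary formulas for $(P_n)_{13}$ and $(P_n)_{23}$ are correctly identified, the induction propagates mechanically.
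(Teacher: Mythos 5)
Your proposal is correct and is essentially the paper's own argument: the paper also proves by induction that the matrix product has shadowed continuants as its even entries (with exactly the odd entries $(P_n)_{13}$, $(P_n)_{23}$ you identify) and its key Lemma~\ref{RecProp} / equation~\eqref{SuperRecBis} is precisely your Euler-operator identity applied to the classical continuant recursion. The only cosmetic difference is that the paper absorbs your parity/column-swap bookkeeping by rewriting the product as $A_{a_1}B_{a_2}\cdots$ with $A_k=\cR^k\tilde S$ and $B_\ell=\tilde S\cL^\ell$, where $\tilde S$ is the swap matrix.
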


\begin{proof}
We will need the following linear recurrence relation.
Its proof is a short exercise combining \eqref{ClassRec}
with the definition of $\cK_n$.

\begin{lem}
\label{RecProp}
One has
\begin{equation}
\label{SuperRec}
\cK_n(a_1, \ldots, a_n) =
\left\{
\begin{array}{lr}
a_1\cK_{n-1}(a_2,\ldots,a_n)(1+\xi\eta)
+ \cK_{n-2}(a_3,\ldots,a_n),
& n\;\mbox{\it even},\\[6pt]
a_1\cK_{n-1}(a_2,\ldots,a_n)(1+\xi\eta)
+ \cK_{n-2}(a_3,\ldots,a_n) - a_1\xi\eta,
& n\;\mbox{\it odd}.
\end{array}
\right.
\end{equation}
\end{lem}

In order to give an alternate formulation of this lemma,
let $\overline{n}$ denote $n$ modulo 2, i.e.,
$$
\overline{n} :=
\Big\{
\begin{array}{l}
\text{0 if $n$ is even,} \\
\text{1 if $n$ is odd.}
\end{array}
$$
The shadow part of \eqref{SuperRec} is equivalent to
\begin{equation}
\label{SuperRecBis}
K'_n(a_1, \ldots, a_n) =
a_1 K_{n-1}(a_2, \ldots, a_n) + a_1 K'_{n-1}(a_2, \ldots, a_n)
+ K'_{n-2}(a_3, \ldots, a_n) - a_1 \overline{n}.
\end{equation}

We will outline the proof of the theorem in the case that $n = 2m$ is even,
leaving the remaining details to the reader.  Consider the matrices
$$
A_k=\begin{pmatrix}
k&1&k\xi\\[2pt]
1&0&0\\[2pt]
k\xi&0&1
\end{pmatrix},
\qquad\qquad
B_\ell=
\begin{pmatrix}
\ell&1&\!\!-\ell\eta\\[2pt]
1&0&0\\[2pt]
\ell\eta&0&1
\end{pmatrix},
\qquad\qquad
\tilde S=
\begin{pmatrix}
0 & 1 & 0 \\[2pt]
1 & 0 & 0 \\[2pt]
0 & 0 & 1
\end{pmatrix}.
$$
Note that $A_k = \cR^k \tilde S$ and $B_\ell = \tilde S \cL^\ell$.
These two sequences of matrices encode the recurrence~\eqref{SuperRec}.
Because $\tilde S^2 = I$, we have $\cR^k \cL^\ell=A_k B_\ell$, and so
$$
\cR^{a_1}\cL^{a_2}\cdots\cR^{a_{2m-1}}\cL^{a_{2m}}=
A_{a_1}B_{a_2}\cdots{}A_{a_{2m-1}}B_{a_{2m}}.
$$
Using \eqref{SuperRec}, one proves by induction that
$A_{a_1}B_{a_2}\cdots{}A_{a_{2m-1}}B_{a_{2m}}$ is equal to
$$
\begin{pmatrix}
\cK_{2m}(a_1,a_2,\ldots,a_{2m}) & 
\cK_{2m-1}(a_1,a_2,\ldots,a_{2m-1}) &
\begin{array}{l}
-(K_{2m}(a_1,a_2,\ldots,a_{2m}) -1)\eta\\[2pt]
+K_{2m-1}(a_1,a_2,\ldots,a_{2m-1})\xi
\end{array}
\\[16pt]
\cK_{2m-1}(a_2,\ldots,a_{2m}) & 
\cK_{2m-2}(a_2,\ldots,a_{2m-1}) &
\begin{array}{l}
(K_{2m-2}(a_2,\ldots,a_{2m-1})-1)\xi\\[2pt]
-K_{2m-1}(a_2,\ldots,a_{2m})\eta
\end{array}
\\[16pt]
\begin{array}{l}
(K_{2m}(a_1,a_2,\ldots,a_{2m}) -1)\xi\\[2pt]
+K_{2m-1}(a_2,\ldots,a_{2m})\eta
\end{array}
&   
\begin{array}{l}
K_{2m-1}(a_1,a_2,\ldots,a_{2m-1})\xi\\[2pt]
+(K_{2m-2}(a_2,\ldots,a_{2m-1})-1)\eta 
\end{array}
& 1- 
\begin{array}{l}
(K_{2m}(a_1,a_2,\ldots,a_{2m})\\[2pt]
+K_{2m-2}(a_2,\ldots,a_{2m-1})-2)\xi\eta
\end{array}
\end{pmatrix}
$$
This leads to the statement of Theorem~\ref{CFProp}.
\end{proof}

\begin{rem}
The coefficient $1+\xi\eta$ appearing in~\eqref{SuperRec}
is reminiscent of the coefficient $1+\xi_i\xi_j$ in the exchange relations
formula in~\cite{OS}; see Eq.~(2).
It would be interesting to establish a direct relation
between~\eqref{SuperRec} and
the version of cluster superalgebras developed in~\cite{OS}.
\end{rem}

\section{Positivity, localization, and convergence}\label{BoundSec}

In this section we apply \eqref{ShadContCFEq} to prove 
various properties of supersymmetric continued fractions.

\subsection{Positivity}\label{PosSec}

This section concerns the positivity of the even and odd shadows
$\big(\frac{p}{q}\big)_{\ES}$ and $\big(\frac{p}{q}\big)_{\OS}$
of non-integral positive rationals $\frac{p}{q}$.
(For the integral case, see~\eqref{int case}.)

\begin{thm}
\label{PosProp}
\begin{enumerate}
\item[(i)]
For $\frac{p}{q} > 1$ and non-integral, both
$\big(\frac{p}{q}\big)_{\ES}$ and $\big(\frac{p}{q}\big)_{\OS}$
are positive rationals.

\smallbreak\item[(ii)]
For $0 < \frac{p}{q} < 1$, both
$\big(\frac{p}{q}\big)_{\ES}$ and $\big(\frac{p}{q}\big)_{\OS}$
are negative rationals.
\end{enumerate}
\end{thm}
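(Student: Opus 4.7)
The plan is to prove Theorem~\ref{PosProp} by first extracting a single-step recursion for the shadow from Lemma~\ref{RecProp}, and then running an induction on $n$ with a suitably strengthened hypothesis. My first step would be to divide the relation of Lemma~\ref{RecProp} through by $\cK_{n-1}(a_2,\ldots,a_n)$ and read off the $\xi\eta$-component, using $\xi\eta/\cK_{n-1}=\xi\eta/K_{n-1}$. Writing $y:=[a_2,\ldots,a_n]$ and $s:=[a_2,\ldots,a_n]_{\rmS}$, this yields the uniform formula
\[
[a_1,\ldots,a_n]_{\rmS} \;=\; a_1 \;-\; \frac{s}{y^{2}} \;-\; \overline{n}\,\frac{a_1}{K_{n-1}(a_2,\ldots,a_n)},
\]
which governs both parities simultaneously.

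The technical core will then be the following sharper claim, established by induction on $n$: for every sequence of positive integers $a_1,\ldots,a_n$, one has $0\le[a_1,\ldots,a_n]_{\rmS}<[a_1,\ldots,a_n]$, with strict positivity of the shadow as soon as $n\ge 2$. The bases $n=1$ and $n=2$ are immediate from $[a_1]_{\rmS}=0$ and $[a_1,a_2]_{\rmS}=a_1$. For the even inductive step, the hypothesis $s<y$ at length $n-1$ gives $s/y^{2}<1/y\le 1\le a_1$, hence strictly positive shadow; the upper bound is preserved by $[a_1,\ldots,a_n]-[a_1,\ldots,a_n]_{\rmS}=1/y+s/y^{2}>0$. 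For the odd step, I would rewrite the inductive bound as $s/y^{2}<K_{n-2}/K_{n-1}$ (since $y=K_{n-1}/K_{n-2}$) to obtain
\[
[a_1,\ldots,a_n]_{\rmS} \;>\; \frac{a_1(K_{n-1}-1)-K_{n-2}}{K_{n-1}},
\]
and then invoke the classical continuant recurrence $K_{n-1}=a_2K_{n-2}+K_{n-3}$ together with $K_{n-3}\ge 1$ to conclude $K_{n-1}-1\ge a_2K_{n-2}\ge K_{n-2}$, making the numerator non-negative; strict positivity still persists because the inductive bound $s<y$ was strict. The upper bound is again automatic from $[a_1,\ldots,a_n]-[a_1,\ldots,a_n]_{\rmS}=1/y+s/y^{2}+a_1/K_{n-1}>0$.

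With this auxiliary claim in hand, part~(i) follows at once: any non-integral $p/q>1$ admits two continued-fraction representations, both of length at least two and both with $a_1\ge 1$, so both shadows are strictly positive. For part~(ii), plugging $a_1=0$ into the recursion collapses it to the sign-flipping formula
\[
[0,a_2,\ldots,a_n]_{\rmS} \;=\; -\,\frac{[a_2,\ldots,a_n]_{\rmS}}{[a_2,\ldots,a_n]^{2}},
\]
so negativity for $0<p/q<1$ reduces to the positivity of the tail provided by the auxiliary claim whenever the tail has length at least two.

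The main obstacle I expect is identifying the right strengthening of the inductive hypothesis: mere non-negativity of the shadow is \emph{not} preserved under the odd recursion because of the correction term $-a_1/K_{n-1}$, and the induction collapses. The coupling $s<y$ is the exact bound needed, since it translates through $1/y=K_{n-2}/K_{n-1}$ into an estimate that the continuant inequality $K_{n-1}-1\ge K_{n-2}$ is just strong enough to absorb. Verifying that this sharpening propagates through the odd recursion is the only step that uses anything beyond the recurrence already packaged in Lemma~\ref{RecProp}.
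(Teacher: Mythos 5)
Your argument is correct, and it reaches the conclusion by a genuinely different bookkeeping of the same underlying recursion. The paper works not with the shadow itself but with the $2\times 2$ determinant $D_n = K'_nK_{n-1}-K_nK'_{n-1}$ (so that $[a_1,\ldots,a_n]_\rmS = D_n/K_{n-1}(a_2,\ldots,a_n)^2$), derives $D_n = a_1K_{n-1}(K_{n-1}-\overline{n})-D_{n-1}$ --- which is exactly your one-step formula multiplied through by $K_{n-1}(a_2,\ldots,a_n)^2$ --- and then controls the sign by unrolling the recursion into an alternating sum and showing that each pair of consecutive terms is positive, with $D_2,D_3>0$ as terminal terms. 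You instead keep the recursion one step at a time and absorb the single negative term $-s/y^2$ by strengthening the inductive hypothesis to the coupled inequality $0\le [a_2,\ldots,a_n]_\rmS<[a_2,\ldots,a_n]$; the translation $s/y^2<1/y=K_{n-2}/K_{n-1}$ together with $K_{n-1}-1\ge a_2K_{n-2}\ge K_{n-2}$ then does precisely the work that the paper's pairing of consecutive terms does. Your version isolates a clean quantitative statement of independent interest (the shadow lies strictly below the value, which in particular re-proves the localization of Corollary~\ref{BoundProp} for $a_1\ge1$); the paper's version avoids having to guess the right strengthening of the induction. Both are equally elementary and both rest on Lemma~\ref{RecProp}.

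One shared caveat: your reduction $[0,a_2,\ldots,a_n]_\rmS=-[a_2,\ldots,a_n]_\rmS/[a_2,\ldots,a_n]^2$ yields strict negativity only when the tail has length at least two, as you yourself note. For $\frac{p}{q}=\frac{1}{q}$ the even expansion is $[0,q]$, whose tail has length one and shadow zero, so $\big(\tfrac{1}{q}\big)_{\ES}=0$ rather than negative. The paper's proof has the identical blind spot (there $D_2(0,a_2)=0\cdot a_2^2=0$), so this is a defect of the statement of part~(ii) rather than of your argument, but it should be recorded explicitly rather than left implicit in the phrase ``whenever the tail has length at least two.''
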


\begin{proof}
We will need the following technical lemma.

\begin{lem}
\label{PosLem}
For all positive integers $a_1,\ldots,a_n$, $n \geq 2$,
the following determinant is positive:
$$
D_n(a_1,\ldots,a_n):=\det
\begin{pmatrix}
K'_n(a_1,\ldots,a_n) \phantom{+} & K_n(a_1,\ldots,a_n) \phantom{+} \\[4pt]
K'_{n-1}(a_2,\ldots,a_n) & K_{n-1}(a_2,\ldots,a_n)
\end{pmatrix}.
$$
\end{lem}

\noindent
\textit{Proof of the lemma.}
In this proof, let $K_{n-r}$ denote $K_{n-r}(a_{r+1}, \ldots, a_n)$,
and similarly for $D_{n-r}$.  For $n \geq 2$,
\eqref{ClassRec} and \eqref{SuperRecBis} imply that
\begin{equation} \label{DetRec}
D_n = a_1 K_{n-1} (K_{n-1} - \overline{n}) - D_{n-1},
\end{equation}
where $\overline n$ denotes $n$ modulo~$2$, as earlier.
This gives the alternating sum
$$
D_n = a_1 K_{n-1} (K_{n-1} - \overline{n})
- a_2 K_{n-2} (K_{n-2} + \overline{n} - 1)
+ a_3 K_{n-3} (K_{n-3} - \overline{n}) - \cdots,
$$
the last term in the sum being $D_2$ if $n$ is even and $D_3$ if $n$ is odd.
Note that both $D_2$ and $D_3$ are positive:
$$
D_2 = a_1 a_2^2, \qquad
D_3 = a_2 a_3 \big( a_1 + a_3(a_1 a_2 - 1) \big).
$$

We will show that each pair of consecutive terms in this alternating sum is positive,
proving the lemma.  It suffices to treat the first two terms.  If $n$ is even, they are
$$
a_1 K_{n-1}^2 - a_2 K_{n-2}^2 + a_2 K_{n-2} =
a_1 (a_2 K_{n-2} + K_{n-3})^2 - a_2 K_{n-2}^2 + a_2 K_{n-2}.
$$
The $a_i$ are all positive integers, so this is positive.
On the other hand, if $n$ is odd, the first two terms are
$$
a_1 K_{n-1} (K_{n-1} - 1) - a_2 K_{n-2}^2 =
a_1 (a_2 K_{n-2} + K_{n-3}) (a_2 K_{n-2} + K_{n-3} - 1) - a_2 K_{n-2}^2,
$$
which is again positive, because $n-3$ even
implies that $K_{n-3} - 1$ is non-negative.
\hfill
\qedsymbol

\medbreak
We now prove Theorem~\ref{PosProp}.  For $n \geq 2$,
it follows from~\eqref{ShadContCFEq} that the shadow of
an arbitrary continued fraction $\left[a_1,\ldots,a_n\right]$ is given by
$$
\left[a_1, \ldots, a_n\right]_\rmS =
\frac{D_n(a_1, \ldots, a_n)}{K_{n-1}(a_2, \ldots, a_n)^2}.
$$
Therefore Part~(i) is a direct consequence of Lemma~\ref{PosLem}.
For Part~(ii), observe that $0 < [a_1,\ldots,a_n] < 1$ implies $a_1=0$,
and so \eqref{DetRec} becomes
$D_n(a_1,\ldots,a_n) = -D_{n-1}(a_2,\ldots,a_n)$.
This completes the proof.
\end{proof}

\subsection{Localization}\label{LocSec}

For even shadows, the following strengthening of Theorem~\ref{PosProp}
is immediate from Proposition~\ref{TransProp}.
Note that it does not extend to odd shadows;
for example, $\big(\frac{7}{2}\big)_{\OS}=\frac{5}{4}$.

\begin{cor}
\label{BoundProp}
For every rational $\frac{p}{q}$ in the interval $[n, n+1]$, the even shadow 
$\big(\frac{p}{q}\big)_{\ES}$ is in $[n-1,n]$.
\end{cor}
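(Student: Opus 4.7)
The plan is to derive the corollary purely by combining the translation invariance of the even shadow (Proposition~\ref{TransProp}), the positivity result (Theorem~\ref{PosProp}), and the integer values recorded in~\eqref{int case}.

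First, I would iterate Proposition~\ref{TransProp} to obtain $\big(\frac{p}{q}+k\big)_{\ES} = \big(\frac{p}{q}\big)_{\ES} + k$ for every non-negative integer $k$. This reduces the general claim to the ``fundamental'' interval $[0,1]$: given $\frac{p}{q}\in[n,n+1]$, I would write $p' := p-nq$ so that $\frac{p'}{q}\in[0,1]$ and $\big(\frac{p}{q}\big)_{\ES} = \big(\frac{p'}{q}\big)_{\ES}+n$. It therefore suffices to prove $\big(\frac{p'}{q}\big)_{\ES}\in[-1,0]$.

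The two integer endpoints are handled at once by~\eqref{int case}: they give shadow $-1$ at $\frac{p'}{q}=0$ and shadow $0$ at $\frac{p'}{q}=1$, landing exactly on the boundary of $[-1,0]$. For the open case $0<\frac{p'}{q}<1$, the upper bound $\big(\frac{p'}{q}\big)_{\ES}<0$ is Theorem~\ref{PosProp}(ii). For the lower bound I would apply translation invariance one more time: since $\frac{p'+q}{q}\in(1,2)$ is a non-integral rational greater than $1$, Theorem~\ref{PosProp}(i) yields $\big(\frac{p'+q}{q}\big)_{\ES}>0$; but this equals $\big(\frac{p'}{q}\big)_{\ES}+1$, forcing $\big(\frac{p'}{q}\big)_{\ES}>-1$.

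No genuine obstacle is expected here: the corollary really is, as the authors say, immediate from Proposition~\ref{TransProp} once Theorem~\ref{PosProp} is invoked. The only subtle point is to peel off the two integer endpoints separately, via~\eqref{int case}, so that the closed interval $[n-1,n]$ is obtained with the correct boundary values rather than the open interval $(n-1,n)$.
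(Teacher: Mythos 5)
Your proposal is correct and follows exactly the route the paper intends: the authors state that the corollary is ``immediate from Proposition~\ref{TransProp}'' as a strengthening of Theorem~\ref{PosProp}, i.e., translate into the fundamental interval and invoke positivity, which is precisely your argument. Your write-up merely supplies the details the paper omits, including the careful treatment of the integer endpoints via~\eqref{int case} and the use of both parts of Theorem~\ref{PosProp} to pin the shadow between $n-1$ and $n$.
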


\subsection{Convergence}\label{ConvSec}

We are now prepared to prove Theorem~\ref{ConvThm}.
Let $a_1,a_2,\ldots$ be a sequence of integers such that
$a_1 \geq 0$ and $a_i\geq1$ for $i>1$, and let
$$
X_n = \{a_1, a_2, \ldots, a_n \} =
[a_1, a_2, \ldots, a_n] + \xi\eta [a_1,a_2,\ldots,a_n]_\rmS
$$
be the $n^{\thup}$ convergent of the corresponding supersymmetric continued fraction.
Write $X_n = x_n + x'_n \xi\eta$,
where $x_n = [a_1, a_2, \ldots, a_n]$ is the classical continued fraction
and $x'_n = [a_1, a_2, \ldots, a_n]_\rmS$ is its shadow.
We must prove that the sequence $(x'_n)$ converges.
Recall the Euler vector field $\E$ defined in~\eqref{EulerEq}.

\begin{lem}
\label{ShadConvLem}
One has
$$
\left[ a_1, a_2, \ldots, a_n \right]_\rmS =
\E \left( \frac{K_n(a_1, \ldots, a_n)} {K_{n-1}(a_2, \ldots, a_n)} \right)
+ \frac{1- \overline{n}} {K_{n-1}(a_2, \ldots, a_n)}
- \frac{ \overline{n}\, K_n(a_1, \ldots, a_n)} {K_{n-1}(a_2, \ldots, a_n)^2}.
$$
\end{lem}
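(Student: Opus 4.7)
The plan is to derive the formula directly from Theorem~\ref{CFProp} by expanding the shadowed continuants into classical and shadow parts, computing the quotient in the algebra modulo $\xi^2 = \eta^2 = \xi\eta + \eta\xi = 0$, and recognizing the resulting expression as an Euler derivative.

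First I would apply Theorem~\ref{CFProp} to write $\{a_1,\ldots,a_n\} = \cK_n / \cK_{n-1}$ and use \eqref{ShContDefBis} to split each shadowed continuant as $\cK_m = K_m + \xi\eta K'_m$. Since $(\xi\eta)^2 = 0$, the geometric series for the inverse truncates, giving
$$
\frac{1}{K_{n-1} + \xi\eta K'_{n-1}} = \frac{1}{K_{n-1}} - \xi\eta\,\frac{K'_{n-1}}{K_{n-1}^2}.
$$
Multiplying out $(K_n + \xi\eta K'_n)(\tfrac{1}{K_{n-1}} - \xi\eta\tfrac{K'_{n-1}}{K_{n-1}^2})$ and matching the coefficient of $\xi\eta$ with Definition~\ref{TheMainDefn}, I obtain
$$
[a_1,\ldots,a_n]_\rmS = \frac{K'_n K_{n-1} - K_n K'_{n-1}}{K_{n-1}^2}.
$$

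Next I would invoke \eqref{ShContEqBis}, rewritten uniformly as $K'_m = (\E-1)K_m + (1-\overline{m})$, valid for both parities. Substituting this for $K'_n$ and $K'_{n-1}$, and noting the parity identity $\overline{n-1} = 1 - \overline{n}$, the numerator becomes
$$
\bigl((\E-1)K_n\bigr)K_{n-1} - K_n\bigl((\E-1)K_{n-1}\bigr) + (1-\overline n)K_{n-1} - \overline n\, K_n.
$$
The $-K_n K_{n-1}$ contributions cancel, leaving $(\E K_n)K_{n-1} - K_n(\E K_{n-1})$ plus the remaining two terms.

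Finally, since $\E = \sum_i a_i \partial/\partial a_i$ is a derivation, the quotient rule yields
$$
\E\!\left(\frac{K_n(a_1,\ldots,a_n)}{K_{n-1}(a_2,\ldots,a_n)}\right) = \frac{(\E K_n)K_{n-1} - K_n(\E K_{n-1})}{K_{n-1}^2}.
$$
Dividing the whole expression by $K_{n-1}^2$ and recognizing this term gives exactly the three summands in the lemma.

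The computation is essentially bookkeeping; the only subtle point is keeping the parity indicator $\overline{n}$ consistent between consecutive continuants, and observing that this is precisely what produces the asymmetric last two terms of the stated formula.
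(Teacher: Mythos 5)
Your proposal is correct and follows exactly the route of the paper's own (very terse) proof: apply Theorem~\ref{CFProp} with the decomposition~\eqref{ShContDefBis} to get $[a_1,\ldots,a_n]_\rmS = (K'_nK_{n-1}-K_nK'_{n-1})/K_{n-1}^2$, then substitute~\eqref{ShContEqBis} and use the quotient rule for the derivation~$\E$. Your uniform rewriting $K'_m=(\E-1)K_m+(1-\overline{m})$ and the identity $\overline{n-1}=1-\overline{n}$ correctly account for the two parity-dependent terms, so the details you supply are exactly the computation the paper leaves to the reader.
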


\begin{proof}
By \eqref{ShadContCFEq} and~\eqref{ShContDefBis}, 
$$
\left[a_1, \ldots, a_n \right]_\rmS =
\frac{K'_n(a_1, \ldots, a_n)} {K_{n-1}(a_2, \ldots, a_n)}
- \frac{K_n(a_1, \ldots, a_n) K'_{n-1}(a_2, \ldots, a_n)} {K_{n-1}(a_2, \ldots, a_n)^2}.
$$
From here, compute using \eqref{ShContEqBis} and the quotient rule.
\end{proof}

Now consider $x'_n - x'_{n-1}$.
It is well-known that the classical continuants satisfy the identity
$$
K_n(a_1, \ldots, a_n) K_{n-2}(a_2, \ldots, a_{n-1}) -
K_{n-1}(a_1, \ldots, a_{n-1}) K_{n-1}(a_2, \ldots, a_n)
= (-1)^n.
$$
Therefore the difference of the two terms involving $\E$ simplifies:
\begin{equation}
\label{AlterEqo}
\E \left( \frac{ K_n(a_1, \ldots, a_n)} {K_{n-1}(a_2, \ldots, a_n)}
- \frac{ K_{n-1}(a_1, \ldots, a_{n-1})} {K_{n-2}(a_2, \ldots, a_{n-1})}
\right)
=
\E\left(
\frac{(-1)^n}
{K_{n-1}(a_2, \ldots, a_n) K_{n-2}(a_2, \ldots, a_{n-1})}
\right).
\end{equation}

\begin{lem}
\label{LastLem}
There is a constant $C$ such that
$|x'_{n} - x'_{n-1}| \leq C (a_1 + 1) \varphi^{-n}$,
where $\varphi$ is the golden ratio.
\end{lem}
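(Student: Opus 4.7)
My plan is to derive a manageable expression for $x'_n - x'_{n-1}$ from Lemma~\ref{ShadConvLem} and then bound each piece using the classical fact that continuants with positive integer entries grow at least like Fibonacci numbers.

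First I write out $x'_n - x'_{n-1}$ by applying Lemma~\ref{ShadConvLem} at both indices. The two Euler-of-quotient terms combine via the identity \eqref{AlterEqo} to
$$
\E\!\left(\frac{(-1)^n}{K_{n-1}(a_2,\ldots,a_n)\,K_{n-2}(a_2,\ldots,a_{n-1})}\right),
$$
which depends only on $a_2,\ldots,a_{n-1}$ --- the variable $a_1$ has disappeared. Since exactly one of $\overline{n}$ and $\overline{n-1}$ equals $1$, the remaining contributions from Lemma~\ref{ShadConvLem} collapse to a single piece of the form $\pm 1/K_k$ and a single piece of the form $\pm K_{j}/K_{k}^{2}$.

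Next I bound each piece using two estimates: (a)~the standard lower bound $K_k(b_1,\ldots,b_k) \geq F_k \geq c_0 \varphi^k$ whenever every $b_i \geq 1$, with $c_0$ an absolute constant; and (b)~the inequality $K_n(a_1,\ldots,a_n) \leq (a_1+1)\,K_{n-1}(a_2,\ldots,a_n)$, which follows from the recurrence \eqref{ClassRec} together with the monotonicity $K_{n-2} \leq K_{n-1}$. For the Euler term, since $\E$ applied to a polynomial of degree $\leq d$ with non-negative coefficients is bounded pointwise by $d$ times the polynomial, I obtain
$$
\left|\E\!\left(\frac{1}{K_{n-1}K_{n-2}}\right)\right|
\leq \frac{2n-3}{K_{n-1}K_{n-2}} \leq \frac{2n-3}{c_0^2 \,\varphi^{2n-3}},
$$
and since $n\varphi^{-n}$ is bounded, the full quantity is $O(\varphi^{-n})$. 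The $1/K_k$ piece is similarly $O(\varphi^{-n})$ by~(a), while the $K_j/K_k^{2}$ piece is bounded by $(a_1+1)/K_k \leq C' (a_1+1)\varphi^{-n}$ via~(b). Summing these three bounds yields the desired estimate, with $C$ chosen large enough to cover small~$n$ as well.

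The main obstacle is recognizing that the factor $a_1+1$ enters only through the $\overline{n}\,K_n/K_{n-1}^{2}$ term of Lemma~\ref{ShadConvLem}: the identity \eqref{AlterEqo} is set up precisely so that the $a_1$-contribution of the Euler-of-quotient term at level $n$ cancels against its counterpart at level $n-1$, leaving an $a_1$-independent reciprocal to differentiate. Without this cancellation one would only get a bound proportional to $a_1$ times a polynomial factor divided by $\varphi^n$, which is weaker than claimed. A minor technical point is absorbing the polynomial factor~$n$ arising from the Euler operator into $\varphi^{-n}$ by enlarging $C$.
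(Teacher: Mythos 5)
Your proposal is correct and follows essentially the same route as the paper: combine the Euler-of-quotient terms via \eqref{AlterEqo}, rewrite as $\E(K_{n-1}K_{n-2})/(K_{n-1}K_{n-2})^2$ bounded by $(2n-3)/(F_{n-1}F_{n-2})$, and control the remaining $1/K$ and $K/K^2$ pieces using the Fibonacci lower bound and the recurrence \eqref{ClassRec} to extract the $(a_1+1)$ factor. Your explicit observation that exactly one of $\overline{n}$, $\overline{n-1}$ is nonzero, so only one term of each type survives, is a slightly more careful bookkeeping of what the paper leaves implicit.
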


\begin{proof}
First let us prove that there is a constant $C$ such that
\eqref{AlterEqo}~$\leq C n \varphi^{-2n}$.  Rewrite it as
$$
(-1)^{n-1} \frac{\E \big( K_{n-1}(a_2, \ldots, a_n) K_{n-2}(a_2, \ldots, a_{n-1}) \big)}
{\big( K_{n-1}(a_2, \ldots, a_n) K_{n-2}(a_2, \ldots, a_{n-1}) \big)^2}.
$$
Recall that $a_2, \ldots, a_n$ are positive integers.
Because $K_n$ is of degree~$n$,
$$
\E\big( K_{n-1}(a_2, \ldots, a_n) K_{n-2}(a_2, \ldots, a_{n-1}) \big) \leq
(2n-3) K_{n-1}(a_2, \ldots, a_n) K_{n-2}(a_2, \ldots, a_{n-1}).
$$

Let $F_n$ be the Fibonacci sequence, beginning from $F_0 = 1$.
Elementary arguments show that $K_n$
is a sum of $F_n$ monomials.  Therefore
$$
K_{n-1}(a_2, \ldots, a_n) K_{n-2}(a_2, \ldots, a_{n-1}) \geq F_{n-1} F_{n-2},
$$
giving the stated bound for~\eqref{AlterEqo}.

Next, note that the term in $x'_n$ with denominator
$K(a_2, \ldots, a_n)$ is of order $\varphi^{-n}$.
Finally, apply \eqref{ClassRec} to deduce that
the term in $x'_n$ with denominator $K(a_2, \ldots, a_n)^2$
and numerator containing $K(a_1, \ldots, a_n)$ is of order
$(a_1 + 1) \varphi^{-n}$.  The lemma follows.
(We add $1$ to $a_1$ here
because $a_1$ may be~$0$.)
\end{proof}

Theorem~\ref{ConvThm} follows from this lemma and Theorem~\ref{PosProp}.

\begin{rem}
The same argument used in the classical case shows that the difference
$x'_{n}-x'_{n-2}$ is positive if $n$ is odd and negative if $n$ is even.
\end{rem}

\subsection{Concrete examples}\label{ExaSec}

Here we give numerical approximations of two shadows of irrationals.

\begin{ex}
\label{GoldProp}
The simplest infinite continued fraction is the golden ratio,
$$
\varphi=\frac{1+\sqrt{5}}{2}=[1,1,1,\ldots].
$$
Computation suggests that its shadow is
$$
\left(\varphi\right)_\rmS=\frac{5+\sqrt{5}}{10}=\frac{1}{1+\varphi}.
$$
Compare this prediction to Corollary~\ref{PhiCorol}.
\end{ex}

The shadows of the convergents are presented in Fig.~\ref{fiGold}.
The sequence converges quickly and apparently monotonically,
in contrast with the (rather fractal) figures \ref{Sha12Fig} and~\ref{OddSha12}.
\begin{figure}[htpb]
    \centering
    \includegraphics[width=0.8\textwidth]{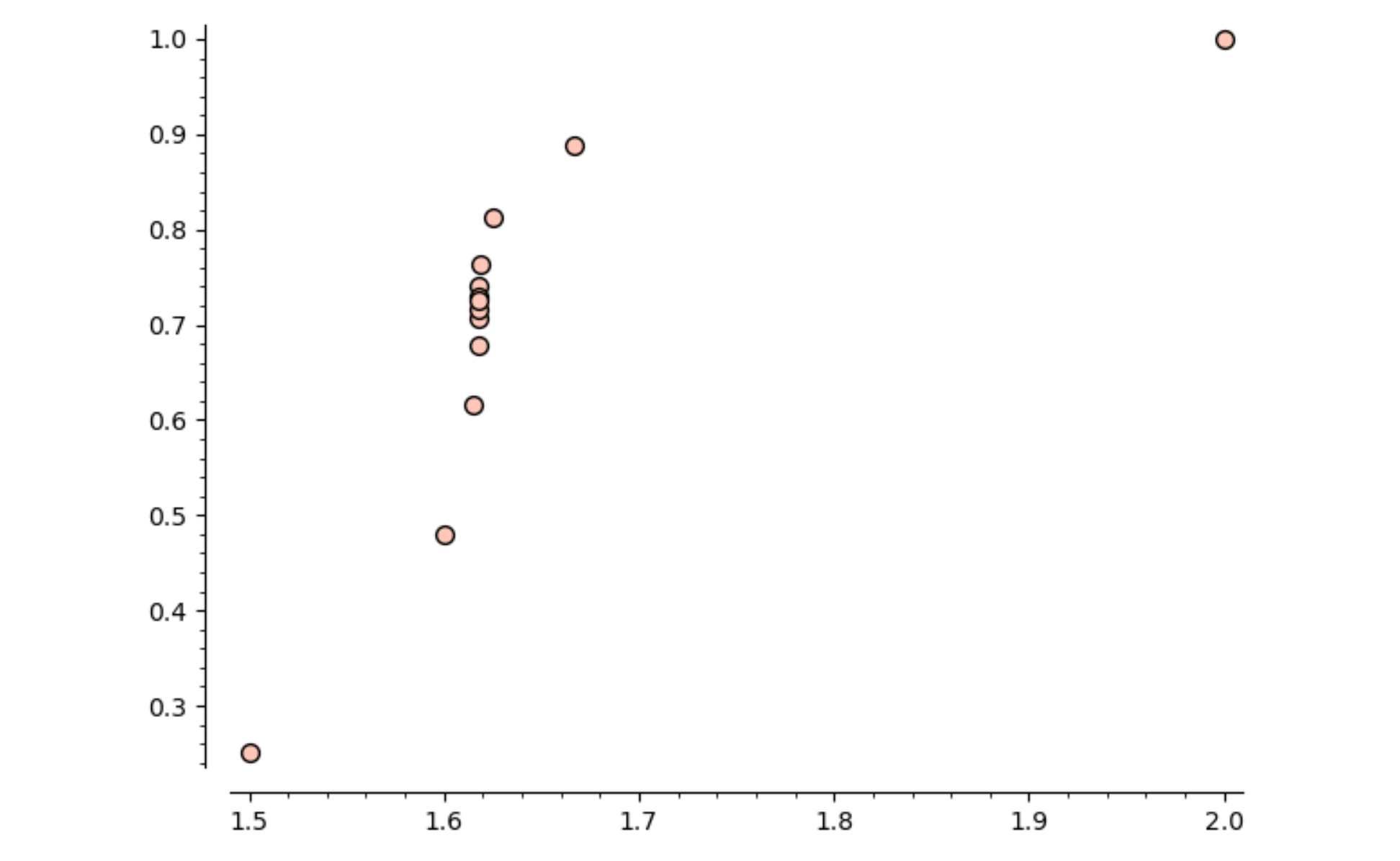}
    \caption{Golden ratio shadow convergence.}
    \label{fiGold}
\end{figure}

\begin{ex}
The second most simple example is the number
$$
\d = 1+\sqrt{2} = [2,2,2,\ldots],
$$
often called the silver ratio.
Computation suggests that its shadow is
$$
\left(\d\right)_\rmS=1+\frac{\sqrt{2}}{2}=\frac{1+\d}{2}.
$$
\end{ex}

\section{The shadowed Farey tree}\label{Far}

In this section we define a variant of the Farey tree of rationals
which contains Grassmann variables and produces shadows.

\subsection{The classical Farey tree}

The Farey tree can be constructed inductively,
by iterating the ``local branching rule''
on the following initial ``fishbone diagram'':
$$
\xymatrix @!0 @R=0.38cm @C=0.38cm
{
&&{\frac{1}{-1}}&&&&&&&&&&&&&{\frac{p-r}{q-s}}&&\\
&&&&&&&&&&&&&&&&&\\
&&\bullet\ar@{-}[dd]\ar@{-}[rru]\ar@{-}[llu]&&&&&&&&&&&&&\bullet\ar@{-}[rru]\ar@{-}[llu]\ar@{-}[dd]\\
&{\frac{1}{0}}&&&{\frac{0}{1}=\frac{0}{-1}}&&&&&&&&&&{\frac{p}{q}}&&{\frac{r}{s}}\\
&&\bullet\ar@{-}[lld]\ar@{-}[rrd]&&&&&&&&&&&&&\bullet\ar@{-}[lld]\ar@{-}[rrd]\\
&&&&&&&&&&&&&&&&&\\
&&{\frac{1}{1}}&&&&&&&&&&&&&{\frac{p+r}{q+s}}&&
}
$$
The quantity $\frac{p}{q}*\frac{r}{s}=\frac{p+r}{q+s}$
is called the \textit{mediant}, or \textit{Farey sum}.

We present here the Farey tree growing downward from the initial diagram.
The tree is ``doubly infinite'': the part growing upward contains the negative rationals.
$$
\begin{small}
\xymatrix @!0 @R=0.38cm @C=0.38cm
{&&&&&&&&&&&&&&&&&&&&&&&&&&\\
&&&&&&&&&&&{-\frac{2}{1}}&&\bullet\ar@{-}[lld]\ar@{-}[lu]&&&{-\frac{1}{1}}
&&&\bullet\ar@{-}[ru]\ar@{-}[rrd]&&{-\frac{1}{2}}\\
&&&&&&&&&&&&&&&&\bullet\ar@{-}[dd]\ar@{-}[rrru]\ar@{-}[lllu]&&&&&&&\\
&&&&&&&&&&&&&&{\frac{1}{0}}&&&&{\frac{0}{1}}&&&&&&&\\
&&&&&&&&&&&&&&&&\bullet\ar@{-}[lllllllldd]\ar@{-}[rrrrrrrrdd]&&&&&&&&\\
&&&&&&&&&&&&&&&&{\frac{1}{1}}\\
&&&&&{\frac{1}{0}}&&&\bullet\ar@{-}[lllldd]\ar@{-}[rrrrdd]
&&&{\frac{1}{1}}&&&&&&&&&&{\frac{1}{1}}&&&\bullet\ar@{-}[lllldd]\ar@{-}[rrrrdd]&&&{\frac{0}{1}}\\
&&&&&&&&{\frac{2}{1}}
&&&&&&&&&&&&&&&&{\frac{1}{2}}\\
&&&&\bullet\ar@{-}[lldd]_{\frac{1}{0}}\ar@{-}[rrdd]^{\frac{2}{1}}
&&&&&&&&\bullet\ar@{-}[lldd]_{\frac{2}{1}}\ar@{-}[rrdd]^{\frac{1}{1}}
&&&&&&&&\bullet\ar@{-}[lldd]_{\frac{1}{1}}\ar@{-}[rrdd]^{\frac{1}{2}}
&&&&&&&&\bullet\ar@{-}[lldd]_{\frac{1}{2}}\ar@{-}[rrdd]^{\frac{0}{1}}\\
&&&&&&&&&&&&&&&&&&&&&&&&&&&&\\
&&\bullet\ar@{-}[ldd]_{\frac{1}{0}}\ar@{-}[rdd]
&&{\frac{3}{1}}&&\bullet\ar@{-}[ldd]\ar@{-}[rdd]^{\frac{2}{1}}
&&&&\bullet\ar@{-}[ldd]_{\frac{2}{1}}\ar@{-}[rdd]
&&{\frac{3}{2}}&&\bullet\ar@{-}[ldd]\ar@{-}[rdd]^{\frac{1}{1}}
&&&&\bullet\ar@{-}[ldd]_{\frac{1}{1}}\ar@{-}[rdd]
&&{\frac{2}{3}}&&\bullet\ar@{-}[ldd]\ar@{-}[rdd]^{\frac{1}{2}}
&&&&\bullet\ar@{-}[ldd]_{\frac{1}{2}}\ar@{-}[rdd]
&&{\frac{1}{3}}&&\bullet\ar@{-}[ldd]\ar@{-}[rdd]^{\frac{0}{1}}\\
&&&&&&&&&&&&&&&&&&&&&&
&&&&&&&&&&&\\
&&{\frac{4}{1}}
&&&&{\frac{5}{2}}
&&&&{\frac{5}{3}}
&&&&{\frac{4}{3}}
&&&&{\frac{3}{4}}
&&&&{\frac{3}{5}}
&&&&{\frac{2}{5}}
&&&&{\frac{1}{4}}&&\\
&&&&\ldots&&&&&&&&&&&&\ldots&&&&&&&&&&&&\ldots
}
\end{small}$$

\subsection{The $\PSL(2,\Z)$ action}

The classical Farey tree has a beautiful symmetry under $\PSL(2,\Z)$.
Corresponding to each vertex of the tree is an element of the group of order~$3$.
For example, in the initial diagram we have
$$
\xymatrix @!0 @R=0.3cm @C=0.3cm
{
&\ar@{-}[rrdd]&&&&\frac{0}{1}&&&&\\
&&&&&\\
-\frac{1}{1}&&&U\ar@{-}[lldd]\ar@{-}[rrrr]&&&&
V\ar@{-}[rruu]\ar@{-}[rrdd]&&&\frac{1}{1}\ \ ,\\
&&&&&\\
&&&&&\frac{1}{0}&&&&
}
$$
where (using the symbol ``$\equiv$'' as a reminder that
elements of $\PSL(2, \Z)$ are cosets)
$$
U \equiv \begin{pmatrix}
0&-1\\[2pt]
1&1
\end{pmatrix},
\qquad\qquad
V \equiv \begin{pmatrix}
1&-1\\[2pt]
1&0
\end{pmatrix}.
$$
The linear fractional transformations by which $U$ and $V$ act
rotate the Farey graph by $2\pi/3$ counterclockwise around their vertices:
$$
\xymatrix
{
\frac{1}{0}\ar@{->}[r]^U&\frac{0}{1}\ar@{->}[r]^U&-\frac{1}{1}\ar@{->}[r]^U&\frac{1}{0}
},
\qquad\qquad
\xymatrix
{
\frac{1}{0}\ar@{->}[r]^V&\frac{1}{1}\ar@{->}[r]^U&\frac{1}{0}\ar@{->}[r]^U&-\frac{1}{0}=\frac{1}{0}.
}
$$

Similarly, corresponding to each edge of the tree is an element of order~$2$.
In the initial diagram,
$$
\xymatrix @!0 @R=0.3cm @C=0.3cm
{
&\ar@{-}[rrdd]&&&&\frac{0}{1}&&&&\\
&&&&&\\
-\frac{1}{1}&&&\bullet\ar@{-}[lldd]\ar@{-}[rrrr]|-{S}&&&&
\bullet\ar@{-}[rruu]\ar@{-}[rrdd]&&&\frac{1}{1}\\
&&&&&\\
&&&&&\frac{1}{0}&&&&
}
$$
where
$$
S \equiv \begin{pmatrix}
0&-1\\[2pt]
1&0
\end{pmatrix}.
$$
The linear fractional transformation by which $S$ acts
rotates the Farey tree around its edge by $\pi$.

Note that conjugating $U$ by $S$ gives $V$.
The elements corresponding to all vertices and edges
of the tree may be obtained from $U$ and $S$
by repeated conjugation:
$$
\xymatrix @!0 @R=0.7cm @C=0.5cm
{
&UVU^{-1}\ar@{-}[rrdd]|-{USU^{-1}}&&&&&&&&V^{-1}UV\\
&&&&&\\
&&&U\ar@{-}[lldd]|-{U^{-1}SU}\ar@{-}[rrrr]|-{S}&&&&
V\ar@{-}[rruu]|-{V^{-1}SV}\ar@{-}[rrdd]|-{VSV^{-1}}&&&\\
&&&&&\\
&U^{-1}VU&&&&&&&&VUV^{-1}
}
$$
In fact, the vertices and edges of the tree are in bijection with the
order~$3$ and order~$2$ subgroups of $\PSL(2, \Z)$, respectively.

\subsection{A Farey tree with Grassmann variables}

Recall from~\eqref{UVEq} the elements $\cU$ and $\cV$ of $\OSp(1|2, \Z)$.
As in the classical case, they are of order~$3$, and
they cycle the vectors around the vertices of the
initial fishbone diagram~\eqref{SuperFish}.
We depict this as follows:
$$
\xymatrix @!0 @R=0.38cm @C=0.9cm
{
&\mkern-14mu -1&&\\
&\mkern-14mu \phantom{-}1&&\\
&\mkern-14mu \phantom{-}\eta&&\\
&&&\\
1&\cU \ar@{-}[ruu]\ar@{-}[luu]\ar@{-}[dd]&0\\
0&&1\\
0&\cV \ar@{-}[ldd]\ar@{-}[rdd]&0\\
&&&\\
&1&&\\
&1&&\\
&\xi&&
}
$$
Repeated conjugation extends the tree and associates
an order~$3$ subgroup of $\OSp(1|2, \Z)$ to each vertex:
$$
\xymatrix @!0 @R=0.38cm @C=0.9cm
{
\cdots&&&&\cdots\\
&&&&\\
&\cU^{-1}\cV\cU\ar@{-}[luu]&&\cU\cV\cU^{-1}\ar@{-}[ruu]&\\
&&&&\\
&&\cU\ar@{-}[ruu]\ar@{-}[luu]\ar@{-}[dd]&\\
&&&\\
&&\cV\ar@{-}[ldd]\ar@{-}[rdd]&\\
&&&&\\
&\cV\cU\cV^{-1}\ar@{-}[ldd]&&\cV^{-1}\cU\cV\ar@{-}[rdd]&\\
&&&&
\\
\cdots&&&&\cdots
}
$$

The edges can also be labelled, in a compatible way.
Recall the element $S$ of $\OSp(1|2, \Z)$
from \eqref{RLUVEq}, and let $\tau$ be the automorphism of the underlying
coordinate ring \eqref{REq} defined by $\eta \mapsto \xi \mapsto -\eta$.
Extend $\tau$ to an outer automorphism of $\OSp(1|2, \Z)$,
acting entry-wise.  Then $\tau$ and $S$ commute, and $\tau S$
is an order~$4$ element with square $-I$ whose action exchanges
$\cU$ and $\cV$ and rotates the initial diagram by~$\pi$.

Applying the group elements to the vectors in the initial diagram
gives the shadowed Farey tree below.
We note that it is possible to construct other Grassmann Farey trees,
for example beginning from the initial vectors in Proposition~\ref{AccordProp}.
$$
\begin{small}
\xymatrix @!0 @R=0.55cm @C=0.52cm
{
&&&&&&&&&&2+\xi\eta&&&&&\!\!-1&&&&&1\\
&&&&&&&&&&-1
&&\bullet\ar@{-}[ldd]\ar@{-}[lu]&&&{1}&&&\bullet\ar@{-}[ru]\ar@{-}[rdd]&&-2-\xi\eta\\
&&&&&&&&&&-\xi-\eta&&&&&\eta&&&&&\xi-\eta\\
&&&&&&&&&&&&&1&&\bullet\ar@{-}[dd]\ar@{-}[rrruu]\ar@{-}[llluu]&&0&&&&&&\\
&&&&&&&&&&&&&0&&&&1\\
&&&&&&&&&&&&&0&&\bullet\ar@{-}[lllllllldd]\ar@{-}[rrrrrrrrdd]&&0&&&&&&\\
&&&&&&&&&&&&&&&1\\
&&&&&&&\bullet\ar@{-}[lllldd]\ar@{-}[rrrrdd]
&&&&&&&&1&
&&&&&&&\bullet\ar@{-}[lllldd]\ar@{-}[rrrrdd]&&&\\
&&&&&&&2+\xi\eta
&&&&&&&&\xi&&&&&&&&1\\
&&&\bullet\ar@{-}[lldd]\ar@{-}[rrdd]
&&&&1&&&&\bullet\ar@{-}[lldd]\ar@{-}[rrdd]
&&&&&&&&\bullet\ar@{-}[lldd]\ar@{-}[rrdd]
&&&&2+\xi\eta&&&&\bullet\ar@{-}[lldd]\ar@{-}[rrdd]\\
&&&3+2\xi\eta&&&&\xi+\eta&&&&3+2\xi\eta&&&&&&&&2+\xi\eta&&&&\xi-\eta&&&&1\\
&\bullet\ar@{-}[ldd]\ar@{-}[rdd]
&&1&&\bullet\ar@{-}[ldd]\ar@{-}[rdd]
&&&&\bullet\ar@{-}[ldd]\ar@{-}[rdd]
&&2+\xi\eta&&\bullet\ar@{-}[ldd]\ar@{-}[rdd]
&&&&\bullet\ar@{-}[ldd]\ar@{-}[rdd]
&&3+2\xi\eta&&\bullet\ar@{-}[ldd]\ar@{-}[rdd]
&&&&\bullet\ar@{-}[ldd]\ar@{-}[rdd]
&&3+2\xi\eta&&\bullet\ar@{-}[ldd]\ar@{-}[rdd]\\
&&&\eta&&&&&&&&3\xi+\eta&&&&&&&&3\xi-\eta&&
&&&&&&-\eta&&&&&\\
&4+2\xi\eta
&&&&5+6\xi\eta
&&&&5+6\xi\eta
&&&&4+2\xi\eta
&&&&3+2\xi\eta
&&&&3+2\xi\eta
&&&&2+\xi\eta
&&&&1&&\\
&1
&&&&2+\xi\eta
&&&&3+2\xi\eta
&&&&3+2\xi\eta
&&&&4+2\xi\eta
&&&&5+6\xi\eta
&&&&5+6\xi\eta
&&&&4+2\xi\eta
\\
&0&&&&\xi+3\eta&&&&4\xi+3\eta&&&&4\xi&&&&4\xi&&&&4\xi-3\eta&&&&\xi-3\eta&&&&0
}
\end{small}
$$

\subsection{Observations}\label{FareyObs}
Our Grassmann Farey tree gives yet another version of the shadow of a rational number,
which should be compared to the even and odd shadows $\ES$ and $\OS$.
We will use the notation $\big(\frac{p}{q}\big)_{\FS}$ for the Farey shadow of~$\frac{p}{q}$.
The definition is as follows: if the vector
$$
\begin{pmatrix}
p + \hat p \xi\eta\\
q + \hat q \xi\eta\\
\cdots
\end{pmatrix}
$$
occurs in the tree, then $\big(\frac{p}{q}\big)_{\FS}$ is
the coefficient of $\xi\eta$ in $(p + \hat p \xi\eta) / (q + \hat q \xi\eta)$:
$$
\Big(\frac{p}{q}\Big)_{\FS} := \frac{p \hat q - \hat p q}{q^2}.
$$
Because the operators $\cU$ and $\cV$ are related
to $\cR$ and $\cL$ via~\eqref{RLUVEq},
the Farey shadow sometimes coincides
with the even or the odd shadow, but not always.

The integers, represented in the classical Farey tree by $\frac{n}{1}$,
appear on its extreme left branch.  Their Farey shadows are
$(1)_{\FS}=0$, $(2)_{\FS}=1$, $(3)_{\FS}=2$, $(4)_{\FS}=2$,
$(5)_{\FS}=2$, $(6)_{\FS}=3$, etc.
This is the OEIS sequence A004524 \cite{OEIS}:
$$
0, 1, 2, 2, 2, 3, 4, 4, 4, 5, 6, 6, 6, 7, \ldots
$$

The ``Fibonacci branch'' of the classical Farey tree is labelled
by quotients of consecutive Fibonacci numbers: $\frac{F_{n+1}}{F_n}$.
It turns out that the Farey shadow sequence $\big(\frac{F_{n+1}}{F_n}\big)_{\FS}$ 
is closely related to the OEIS sequence A054454, which begins with
$\mathrm{(A054454)}_0 = 0$:
$$
0, 1, 2, 6, 12, 26, 50, 97, 180, 332, 600, 1076, \ldots
$$

\begin{prop}
The shadowed Farey tree contains the sequence of $3$-vectors
\begin{equation}
\label{FiboFS}
\begin{pmatrix}
F_{n+1} + \mathrm{(A054454)}_{n-1}\, \xi\eta \\
F_n + \mathrm{(A054454)}_{n-2}\, \xi\eta \\
\cdots
\end{pmatrix}
\ = \
\begin{pmatrix}
1\\
1\\
\xi
\end{pmatrix},\
\begin{pmatrix}
2+\xi\eta\\
1\\
\xi+\eta
\end{pmatrix},\
\begin{pmatrix}
3+2\xi\eta\\
2+\xi\eta\\
3\xi+\eta
\end{pmatrix},\
\begin{pmatrix}
5+6\xi\eta\\
3+2\xi\eta\\
4\xi+3\eta
\end{pmatrix},\
\begin{pmatrix}
8+12\xi\eta\\
5+6\xi\eta\\
8\xi+4\eta
\end{pmatrix}, \cdots.
\end{equation}
\end{prop}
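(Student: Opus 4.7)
The plan is to identify the Fibonacci branch of the shadowed Farey tree explicitly, then read off its coordinates using the shadowed continuant formula of Section~\ref{ExpSec}. First I would observe that the classical Fibonacci quotients $F_{n+1}/F_n$ are precisely the convergents of $[1,1,\ldots,1]$, so by Definition~\ref{TheMainDefn} the corresponding shadowed vectors are produced by alternating matrix products $\cR\cL\cR\cL\cdots$ applied to the seeds $(1,0,0)^t$ or $(0,1,0)^t$. Using the identities $\cV=\cR S$, $\cU=\cL^{-1}S$ from~\eqref{RLUVEq} together with $S^2=I$, these products rewrite as words in $\cU$ and $\cV$ whose images lie on a single branch of the Farey tree, namely the alternating-mediants branch emanating from the initial fishbone vertex $\cV$.

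Next I would invoke Theorem~\ref{CFProp} and the explicit matrix formula appearing in its proof. The first two coordinates of the $n$th vector on this branch are $\cK_n(1,\ldots,1)$ and $\cK_{n-1}(1,\ldots,1)$, whose classical parts equal $F_{n+1}$ and $F_n$ by the standard identity $K_n(1,\ldots,1)=F_{n+1}$. The third coordinate is the explicit $\xi,\eta$-combination of $K_n(1,\ldots,1)-1$ and $K_{n-1}(1,\ldots,1)$ appearing in the bottom-left entry of the product matrix in the proof of Theorem~\ref{CFProp} (with positions and signs depending on the parity of $n$), and a short verification confirms that this reproduces the pattern $\xi,\,\xi+\eta,\,3\xi+\eta,\,4\xi+3\eta,\,8\xi+4\eta,\ldots$ displayed in~\eqref{FiboFS}.

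It remains to prove that $K'_n(1,\ldots,1)=(\mathrm{A054454})_{n-1}$. Specializing Lemma~\ref{RecProp} (equivalently~\eqref{SuperRecBis}) at $a_i=1$ and extracting the shadow part yields the linear recurrence
$$
K'_n(1,\ldots,1) = F_n + K'_{n-1}(1,\ldots,1) + K'_{n-2}(1,\ldots,1) - \overline{n},
$$
with seeds $K'_0(1,\ldots,1) = K'_1(1,\ldots,1) = 0$. Verifying that A054454 satisfies the same recurrence with the same initial values (a brief check against its OEIS definition, or a direct induction from its listed terms $0,1,2,6,12,26,50,\ldots$) completes the identification and hence the proof.

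The main obstacle is the first step: precisely matching the alternating $\cR,\cL$-product coming from the continued fraction $[1,1,\ldots,1]$ with a single branch in the $\cU,\cV$-generated Farey tree. The translation via~\eqref{RLUVEq} involves parity-sensitive tracking of the interleaved $S$-factors and their action on the seed vectors, and one must verify that the intermediate vectors produced at each node of the tree genuinely coincide with the partial products $\cR\cL\cR\cL\cdots$ rather than merely being projectively equivalent modulo the third coordinate.
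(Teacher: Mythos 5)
Your proposal is correct in substance but takes a different route from the paper. The paper's proof is a direct matrix iteration: it sets $\F=\cV\cU^{-1}$, labels the Fibonacci branch by $u_n=\F^n e_2$, and derives a recurrence for the $\xi\eta$-coefficients of the first two entries, matching it against the recurrence of A054454. You instead route through the continuant machinery: you identify the branch vectors with the vectors of Definition~\ref{TheMainDefn} at $a_i=1$, read off the entries as $\cK_n(1,\ldots,1)$ and $\cK_{n-1}(1,\ldots,1)$ via Theorem~\ref{CFProp}, and reduce everything to the identity $K'_n(1,\ldots,1)=(\mathrm{A054454})_{n-1}$, proved from the specialization of \eqref{SuperRecBis}. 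That last recurrence, including the parity correction $-\overline{n}$, is exactly right (check: $0,0,1,2,6,12,26,\ldots$), and your approach has the advantage of tying the proposition to the general theory of Sections~\ref{ExpSec}--\ref{BoundSec} rather than to an ad hoc computation; it also explains for free why the Farey shadows of the convergents alternate between $\ES$ and $\OS$, as noted after the proposition. The "main obstacle" you flag largely dissolves: from \eqref{RLUVEq} one computes $\cV\cU^{-1}=\cR S\cdot S^{-1}\cL=\cR\cL$, so the paper's branch operator \emph{is} your alternating product, and the interleaved vectors $(\cR\cL)^m\cR e_2$ and $(\cR\cL)^m e_1$ are precisely the listed sequence --- no parity-sensitive bookkeeping of $S$-factors beyond this one identity is needed. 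Two small points to fix: your Fibonacci convention ($K_n(1,\ldots,1)=F_{n+1}$, i.e.\ $F_1=F_2=1$) differs from the paper's ($F_0=1$, so $K_n(1,\ldots,1)=F_n$), and you should state which one the displayed formula \eqref{FiboFS} uses; and you should include the one-line verification that A054454 satisfies your recurrence with your seeds, since that is the only place the OEIS sequence actually enters.
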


\begin{proof}
Write $\F$ for $\cV \cU^{-1}$, and $u_0$ for the standard basis vector $e_2$.
Then the Fibonacci branch of the Grassmann Farey tree is labelled
by the vectors $u_n := \F^n u_0$.
Define a sequence $a_n$ by writing $a_{n-1}$ for the coefficient of $\xi\eta$
in the first entry $(u_n)_1$ of $u_n$.
Check that then the coefficient of $\xi\eta$ in the second entry, $(u_n)_2$,
is $a_{n-2}$, and that the sequence $a_n$ satisfies the recurrence
$$
a_{n}=a_{n-1}+a_{n-2}+F_{n+1},
$$
which matches the recurrence of A054454.
\end{proof}

It follows from this proposition that the Farey shadows
of the convergents of the golden ratio satisfy
$$
\left(\frac{F_{n+1}}{F_n}\right)_{\FS}=
\left\{
\begin{array}{l}
\big(\frac{F_{n+1}}{F_n}\big)_{\ES},\quad n \;\hbox{even},\\[8pt]
\big(\frac{F_{n+1}}{F_n}\big)_{\OS},\quad n \;\hbox{odd}.
\end{array}
\right.
$$

Proposition~\ref{FiboFS} implies the following statement,
which should be compared with Example~\ref{GoldProp}.
To prove it, simply apply the formula for A054454 given in the OEIS.

\begin{cor}
\label{PhiCorol}
The sequence of rationals $\big(\frac{F_{n+1}}{F_n}\big)_{\FS}$ converges to
$(\varphi)_\rmS=\frac{5+\sqrt{5}}{10}=\frac{1}{1+\varphi}.$
\end{cor}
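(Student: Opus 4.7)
The plan is to feed Proposition~\ref{FiboFS} into the definition of the Farey shadow and then read off the closed form for A054454 from the OEIS, after which the limit is a one-line asymptotic calculation.

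First, set $a_n:=(\mathrm{A054454})_n$. Applying the formula $(p/q)_{\FS}=(p\hat q-\hat p q)/q^2$ directly to the $3$-vectors written out in Proposition~\ref{FiboFS} yields the exact identity
$$
\left(\frac{F_{n+1}}{F_n}\right)_{\FS}=\frac{F_{n+1}\,a_{n-2}-F_n\,a_{n-1}}{F_n^{\,2}},
$$
so the corollary reduces to a statement about the joint asymptotics of $F_n$ and $a_n$.

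Next, I would invoke the closed form for $a_n$ listed on OEIS. The recurrence satisfied by $a_n$ has homogeneous part with characteristic polynomial $x^2-x-1$ and a Fibonacci-type inhomogeneity, so the roots are resonant and the solution has the shape
$$
a_n=(An+B)\varphi^n+(Cn+D)\bar\varphi^n,\qquad \bar\varphi=\tfrac{1-\sqrt 5}{2},
$$
with constants $A,B,C,D\in\Q(\sqrt 5)$ determined by initial conditions and recorded on OEIS. I would substitute this together with Binet's formula $F_n=(\varphi^n-\bar\varphi^n)/\sqrt 5$ (in the paper's indexing) into the previous display. Both numerator and denominator are of order $\varphi^{2n}$; the contributions from $B$ and $D$ cancel at leading order, and the $\bar\varphi^n$ cross-terms contribute only $O((\varphi\bar\varphi)^n)=O(1)$, hence are exponentially suppressed relative to $\varphi^{2n}$. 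What survives in the limit is a constant built purely from $A$, and a short manipulation using $\varphi^2=\varphi+1$ collapses it to $\tfrac{5+\sqrt 5}{10}$.

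The only real obstacle I anticipate is bookkeeping: reconciling the paper's Fibonacci convention with the OEIS starting index of A054454, and fixing the sign in the Farey-shadow formula so that the positive value $\tfrac{5+\sqrt 5}{10}$ appears rather than its negative. Once these conventions are lined up, no further idea beyond Proposition~\ref{FiboFS} and the OEIS closed form is required; the resonant coefficient $A$ drops out and the desired value emerges in one line of algebra.
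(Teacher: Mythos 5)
Your proposal follows the paper's own (one-line) proof exactly: the paper simply says to apply the OEIS formula for A054454 to Proposition~\ref{FiboFS}, and you have spelled out the resulting Binet-formula asymptotics, correctly identifying that the resonant terms $(An+B)\varphi^n$ make the $n$-dependence cancel in $F_{n+1}a_{n-2}-F_n a_{n-1}$ while leaving a constant multiple of $A$ of order $\varphi^{2n}$. The sign discrepancy you flag is a typo in the paper (the displayed Farey-shadow formula should read $(\hat p q-p\hat q)/q^2$ to be consistent with \eqref{SCFEq} and with $(2)_{\FS}=1$), and with $A=\varphi^2/5$ the surviving constant $\sqrt5\,A/\varphi$ equals $\frac{5+\sqrt5}{10}$ as claimed --- so note that $A$ does not ``drop out'' as your last sentence says; only the $n$-dependence attached to it cancels, and the value of $A$ is needed in the final answer.
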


\begin{rem}
The different approaches of \cite{MOZ}
and \cite{Ovs1} both lead to the sequence A001629,
the self-convolution of the Fibonacci sequence.
It is worth noting that A054454 is related to A001629 by lacunary summation: 
$$
\mathrm{(A054454)}_n =
\mathrm{(A001629)}_n + \mathrm{(A001629)}_{n-2}
+ \mathrm{(A001629)}_{n-4} + \cdots.
$$
\end{rem}

\section{Open problems and conjectures}\label{OpenSec}

In conclusion, we formulate some conjectures and questions.

\subsection{Properties of the shadow function}\label{ContConj}

This subject is unexplored.
Recall from Theorem~\ref{ConvThm} that the shadow
of an irrational number $x$ the limit of the
supersymmetric continued fractions associated to it.
Consider the irrationals, $\mathrm{Irr}\subset\R$.
Is the shadow function,
$$
\mathrm{S}:\mathrm{Irr} \to \R,
$$
continuous?  Based on the convergence property,
as well as on computer experimentation, we conjecture that the answer is yes.

On the other hand, it is easy to check that the shadow functions
$\ES: \Q \to \Q$ and $\OS: \Q \to \Q$ are discontinuous.
For instance, the sequence $\big(2+\frac{1}{n}\big)_{\OS}$ tends to $2$,
but $(2)_{\OS}=0$.

Another challenging question concerns the iteration of the shadow function $\mathrm{S}$.
Given an irrational~$x$, does the sequence $S^n(x)$ converge?

\subsection{The even dominates the odd}\label{AConj}
We conjecture that for every rational $\frac{p}{q}$, one has
$$
\Big(\frac{p}{q}\Big)_{\ES} \geq
\Big(\frac{p}{q}\Big)_{\FS} \geq
\Big(\frac{p}{q}\Big)_{\OS}.
$$
This is supported by computer experiments.
For instance, for every rational in
Figures~\ref{Sha12Fig} and~\ref{OddSha12},
one finds that
$$
\Big(\frac{p}{q}\Big)_{\ES}>\Big(\frac{p}{q}\Big)_{\OS}.
$$
We believe that this strict inequality holds for every~$\frac{p}{q}$.

\subsection{Towards the supersymmetric modular group}\label{IntroOSP}
The supergroup $\OSp(1|2)$ plays the same role in supergeometry
as $\SL(2)$ does classical geometry; see~\cite{Man}.
It consists of $3 \times 3$ matrices over a supercommutative ring $R$,
satisfying the following conditions:
$$
\begin{pmatrix}
a&b&\g\\[2pt]
c&d&\d\\[2pt]
\a&\b&e
\end{pmatrix}
\qquad
\hbox{such that}
\qquad
\left\{
\begin{array}{rcl}
ad-bc&=&1-\a\b,\\[2pt]
e&=&1+\a\b,\\[2pt]
-a\d+c\g&=&\a,\\[2pt]
-b\d+d\g&=&\b.
\end{array}
\right.
$$
Here $a$, $b$, $c$, $d$, and $e$ are even, i.e., elements of $R_{\bar0}$,
and $\a$, $\b$, $\g$, and $\d$ are odd, i.e., elements of $R_{\bar1}$.

Recall that we write $\OSp(1|2, \Z)$ for $\OSp(1|2)$ over
the ring \eqref{REq} of coordinates on $\Z^{1|2}$.
It contains the matrices $\cR,\cL,\cU,\cV$ and $S$,
and one can prove that it is generated by the four matrices
$\cU$, $\cV$, $S$, and
$$
\begin{pmatrix}
1&0&\xi\\[2pt]
0&1&0\\[2pt]
0&\xi&1
\end{pmatrix}.
$$
We hope to investigate it in more detail in a subsequent paper.

\bigbreak \noindent
{\bf Acknowledgements}.
We are grateful to Sophie Morier-Genoud and Alexander Veselov for enlightening discussions.
C.H.C.\ was partially supported by Simons Foundation Collaboration Grant~519533, and
V.O.\ was partially supported by the ANR project PhyMath, ANR-19-CE40-0021.

\end{document}